 \documentclass[journal]{new-aiaa} 
\usepackage[utf8]{inputenc}
\usepackage{graphicx}
\usepackage{caption}
\usepackage{subcaption}
\usepackage{amsmath}
\usepackage[version=4]{mhchem}
\usepackage{siunitx}
\usepackage{longtable,tabularx}
\usepackage{enumitem}
\usepackage{multirow}

\usepackage{amsthm}
\theoremstyle{plain}

\newtheorem{proposition}{Proposition}

\newtheorem{definition}{Definition}
\newtheorem{remark}{Remark}
\newtheorem{assumption}{Assumption}

\title{Quadrotor Guidance for Tasks along Elliptical Boundary}

\author{Ravi Khandelwal \footnote{Graduate Student, Department of Aerospace Engineering, ravikhandelw@iisc.ac.in (Corresponding Author).} and Ashwini Ratnoo \footnote{Professor, Department of Aerospace Engineering, ratnoo@iisc.ac.in, Associate Fellow AIAA. }}
\affil{Indian Institute of Science, Bengaluru, India, 560012}

\begin{document}

\maketitle

\begin{abstract}
This paper presents a guidance method that enables a quadrotor uncrewed aerial vehicle to perform inspection tasks along an elliptical boundary enclosing a structure of interest. The proposed guidance approach leverages a bifurcation theory-based framework that supports multiple inspection tasks, including convergence to a desired viewpoint, switching between viewpoints along the boundary, constant-speed elliptical circumnavigation, and circumnavigation with sectorial dwell-time constraint. Stability analysis of the proposed guidance dynamics establishes convergence of the generated reference trajectories to the desired behavior. The control parameters in the proposed method further allow the settling time of the reference trajectory to be specified for each task. The resulting reference velocity expressions are closed-form and algebraically simple, facilitating practical implementation. A qualitative comparison with existing guidance methods highlights the flexibility of the proposed approach in achieving different inspection tasks through a unified framework, while relying on simpler guidance computations. The proposed guidance method is validated through numerical simulations using a six-degree-of-freedom quadrotor model with model uncertainties, and a cascaded sliding-mode controller that tracks the generated reference trajectories under bounded disturbances.
\end{abstract}

\section{Introduction}
\lettrine{U}{ncrewed} Aerial Vehicles (UAVs), particularly quadrotors, are increasingly utilized for surveillance tasks because of their ability to access difficult-to-reach areas, reduce human error, enhance safety, and offer compact and cost-effective operation \cite{mathe2015vision}, \cite{jordan2018state}. Elliptical paths are widely adopted for perimeter patrolling \cite{paley2009perimeter}, convoy monitoring \cite{borkar2017vector}, and target enclosure \cite{yue2021elliptical} due to their flexibility in shape, smoothness, and ability to efficiently circumscribe asymmetric regions. These properties make elliptical boundaries a preferable choice for standoff surveillance and inspection missions. Within such a framework, the quadrotor may be required to reach, hover and switch between designated viewpoints along the elliptical boundary or to perform circumnavigation along the boundary for continuous inspection.

A wide range of path-planning strategies are proposed for UAV-based inspection tasks. Several of these strategies formulate the inspection mission as an optimization problem, where the UAV is required to traverse a set of predefined viewpoints. For instance, Ref.~\cite{wanninger2020towards} applies an ant colony optimization technique to obtain a trajectory of minimal length around a large structure. In Ref.~\cite{guerrero2013uav}, the inspection route is determined using a Zermelo traveling salesman framework to compute a time-optimal sequence connecting all target points. The work in Ref.~\cite{papadopoulos2013asymptotically} introduces a random inspection tree algorithm that yields the shortest trajectory covering all the viewpoints. Similarly, Ref.~\cite{bolourian2019path} proposes a hybrid approach that combines genetic algorithm with the A-star method to plan paths for inspecting inaccessible bridge surface cracks. It is noteworthy that optimization-based methods require relatively high computational effort for deducing the planning solution. An alternative approach is presented in Ref.~\cite{schorner2021uav}, wherein an artificial potential field guides the UAV through successive viewpoints for indoor structure inspection. The methods reported in Refs.~\cite{wanninger2020towards, guerrero2013uav,  papadopoulos2013asymptotically, bolourian2019path, schorner2021uav} primarily undertake a generic distribution of viewpoints and do not consider cases where viewpoints may follow a specific geometric locus or other spatial constraints. For example, when they are arranged along an elliptical path, a more sophisticated guidance framework is desired.

Elliptical circumnavigation provides a strategy in which a UAV follows an elliptical path which can facilitate continuous inspection. Various approaches are proposed to generate elliptical circumnavigation trajectories, including vector field methods \cite{borkar2017vector, lawrence2008lyapunov, de2017guidance, kapitanyuk2018gvf, shivam2022vector, chen2023vector}, geometry-based methods \cite{gandolfo2016stable, shao2023distance}, limit cycle based method \cite{cheng2010application}, and sliding mode control \cite{liang2017sliding, wang2021adaptive}. A notable limitation of elliptical circumnavigation is that it does not inherently accommodate scenarios in which the UAV is required to hover at specific viewpoints. The methods in Refs.~\cite{borkar2017vector, lawrence2008lyapunov, de2017guidance, kapitanyuk2018gvf, shivam2022vector, chen2023vector, gandolfo2016stable, shao2023distance, cheng2010application, liang2017sliding, wang2021adaptive} are also not applicable to scenarios in which the UAV is required to switch between designated viewpoints. Furthermore, in situations where closer observation is necessary through specific regions of the boundary, constant speed elliptical circumnavigation may be inadequate, thereby motivating an advanced sectorial dwell time constrained circumnavigation along the elliptical boundary.

Existing UAV path planning methods generally address viewpoint-based planning, switching between viewpoints, and elliptical circumnavigation as independent problems, each involving different computational and information requirements. Such isolated treatment limits the flexibility and autonomy of UAV operations. Therefore, a unified framework integrating these trajectory generation strategies is desirable. Motivated by this requirement, we adopt a bifurcation theory–based approach. Bifurcation refers to the changes in a system’s dynamics that occur when its parameters vary and cross a critical threshold \cite{kuznetsov1998elements}. This phenomenon leads to alterations in the qualitative characteristics of the system’s fixed points, including modifications in their number and stability properties \cite{strogartz1994nonlinear}. Consequently, bifurcation can be employed to design guidance laws that generate a variety of desired UAV trajectories. For instance, Ref.~\cite{2020anjaly} utilizes a supercritical pitchfork bifurcation in the look-angle dynamics of a UAV relative to a stationary target, enabling stand-off tracking, spiral search, and target prosecution. In Ref.~\cite{bennet2011autonomous}, pitchfork bifurcation is applied to artificial potential field-inspired velocity dynamics, allowing UAVs to form single-ring, double-ring, and cluster formations. Reference \cite{clark2017autonomous} employs Hopf bifurcation to achieve circumnavigation around a target using multiple UAVs. In another work Ref.~\cite{srinivasu2022standoff}, a two-parameter transcritical bifurcation is introduced for the line-of-sight distance kinematics in UAV engagement with a stationary target, which facilitates circumnavigation while achieving a prescribed settling time.  The authors in Ref.~\cite{khandelwal2026_jgcd} leverage infinite-period bifurcation in the line-of-sight kinematics to guide a quadrotor to a desired viewpoint, switching between viewpoints, and circumnavigation on a standoff circle. The methods presented in Refs.~\cite{2020anjaly, bennet2011autonomous, clark2017autonomous, srinivasu2022standoff, khandelwal2026_jgcd} are, however, restricted to circular paths.

In view of the above discussion, a unified trajectory generation framework is developed to address the limitations of existing distinct approaches. The main contributions of this work are summarized as follows:
\begin{enumerate}
\item An ellipse is formulated using a decoupled two-variable transformation, providing a simplified representation framework. A new saddle-node infinite-period bifurcation is introduced in the transformed two-dimensional space using a single bifurcation parameter. Furthermore, stability analysis identifies the stable fixed point, which is then selected to coincide with the desired viewpoint on the elliptical boundary.

\item The proposed framework establishes a unified mechanism for realizing multiple inspection tasks, including reaching a viewpoint, switching between viewpoints along the ellipse, and continuous elliptical circumnavigation, through variation in the bifurcation parameter.

\item Closed-form settling-time expressions are derived for the proposed dynamics of the transformed variables, enabling control parameters to be tuned to achieve the desired convergence behavior of quadrotor reference trajectories.

\item The framework further enables dwell-time constrained elliptical circumnavigation wherein the quadrotor spends a specified amount of time along a given sector of the boundary. The effectiveness of the overall method is demonstrated through simulations in which reference trajectories generated by the guidance method are tracked using a six-degree-of-freedom (6-DOF) quadrotor model that incorporates aerodynamic drag, model uncertainties, and Dryden-model wind gusts.
\end{enumerate}

Overall, this work presents a quadrotor guidance framework for inspection along an elliptical boundary that is both flexible and computationally efficient. Preliminary results of this work were reported in Ref.~\cite{khand_acc2025}, wherein only kinematic-model simulation results related to reaching a viewpoint and switching between viewpoints were presented. However, that study did not investigate multiple inspection tasks, settling-time analysis, simulations using a 6-DOF quadrotor model,  model uncertainties and external disturbances such as wind, or a qualitative comparison with existing elliptical circumnavigation methods. The present work addresses these limitations and further demonstrates the effectiveness of the proposed approach.

The remainder of the paper is structured as follows: Section \ref{sec:problem description} presents the problem description. Section \ref{sec:proposed method} details the proposed bifurcation-based guidance method along with its stability analysis. A detailed description of the desired trajectory solutions for inspection tasks is provided in Sec.~\ref{sec:inspection tasks}. Section \ref{sec: quadrotor model and control} describes the quadrotor dynamic model and the cascaded sliding mode control architecture. A qualitative comparison with existing elliptical circumnavigation methods is provided in Sec.~\ref{sec: qualitative comparison}. Numerical simulation results are presented in Sec.~\ref{sec:simulation results}, followed by concluding remarks along with directions for future research in Sec.~\ref{sec:conclusions}.

\section{Problem Formulation} \label{sec:problem description}
Consider a planar scenario in which a quadrotor is tasked with performing standoff inspection of an arbitrarily shaped structure, as illustrated in Fig. \ref{geometry_engagment}. In doing so, an elliptical boundary is employed that circumscribes the structure. The inspection may require observing the structure from specific viewpoints, such as A and B, indicated in Fig. \ref{geometry_engagment}. A fixed reference frame is defined, where the center of the ellipse is denoted by O, with semi-major and semi-minor axes represented by $a$ and $b$, respectively. The quadrotor is located at a distance $R$ from O, with line-of-sight angle $\theta$. Its engagement kinematics, moving at speed $v$ with heading angle $\alpha$, are given by
\begin{equation}
  \dot{x} = v \cos \alpha, \quad \dot{y} = v \sin \alpha
  \label{x_dot_y_dot_po}
\end{equation}
where $(x,y)$ denotes the instantaneous position of the quadrotor. The control input is defined as $\mathbf{u} = [\dot{x}, \dot{y}]^\top$, representing the quadrotor’s velocity components. The objective is to develop a guidance strategy that governs the quadrotor’s velocity components and generates a trajectory from an arbitrary initial position outside the elliptical boundary to a desired viewpoint, such as A, as depicted in Fig. \ref{geometry_engagment}, with a specified line-of-sight angle $\theta_d$. As additional guidance solutions, the proposed logic should enable switching from viewpoint A to another viewpoint (for example, B) flying along the elliptical boundary, and also offer continuous circumnavigation along the boundary.
\begin{figure}
      \centering
      \includegraphics[width=0.7\textwidth]{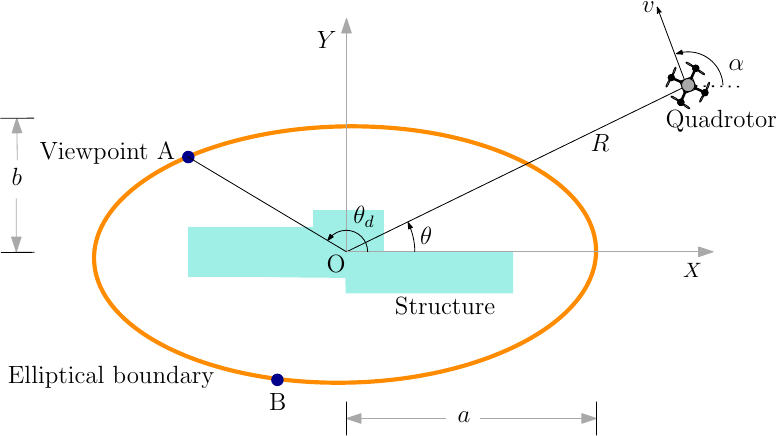}
      \caption{Illustration of the two-dimensional problem scenario.}
      \label{geometry_engagment}
\end{figure}   
The following assumptions are adopted in the present work.
\begin{assumption}
The geometry of the structure, along with the targeted viewpoints and regions of inspection, is assumed to be known a priori, and the corresponding inspection locations are projected onto a predefined elliptical boundary enclosing the structure.
\end{assumption}

\begin{assumption} 
The quadrotor is assumed to be equipped with either a gimballed camera or independent yaw control to maintain camera orientation toward the structure throughout the inspection. 
\end{assumption}

\section{Bifurcation-Based Guidance Method} \label{sec:proposed method}
This section first introduces a new two-variable representation of the quadrotor position. Utilizing which, a bifurcation based position dynamics is proposed which addresses the guidance objectives.

\subsection{Two-Variable Position Representation $(\beta, \phi)$}
The elliptical boundary, as illustrated in Fig. \ref{geometry_engagment}, can be defined as the zero-level set of an implicit function $f(x,y)$ for all $(x, y) \in \mathbb{R}^2$:
\begin{equation}
    S = \{ (x,y) \mid f(x,y) = \beta^2 - 1 = 0 \}
\end{equation}
where $\beta \in \mathbb{R^+}$ is a dimensionless elliptic radial variable defined as
\begin{equation}
    \beta := \sqrt{\frac{x^2}{a^2} + \frac{y^2}{b^2}}
    \label{beta_cov}
\end{equation}
Here, $\beta < 1$ and $\beta > 1$ correspond to $f(x,y) < 0$ and $f(x,y) > 0$, respectively, indicating points lying inside and outside the desired elliptical boundary $(\beta_d = 1)$. A point $(x, y) \in \mathbb{R}^2$ can equivalently be represented as
\begin{equation}
    x = a \beta \cos{\phi}, \quad y = b \beta \sin{\phi}
    \label{cov}
\end{equation}
where $\phi \in [0,2\pi)$ is the elliptic angle variable, computed using the
four-quadrant inverse tangent as
\begin{equation}
    \phi = \operatorname{atan2}\!\left(\frac{y}{b}, \frac{x}{a}\right)
    \label{phi_cov}
\end{equation}
Accordingly, the quadrotor’s position can alternatively be represented by $(\beta, \phi)$ using Eqs.~\eqref{beta_cov} and \eqref{phi_cov} via its Cartesian coordinates.
\begin{remark}
Unlike Cartesian and polar coordinates, a single variable $\beta \in \mathbb{R}^{+}$ uniquely defines an ellipse with centre O, semi-major axis $a\beta$, and semi-minor axis $b\beta$. Hence, $(\beta, \phi)$ provides a decoupled representation of the locus of all points on an ellipse, facilitating the proposed bifurcation-based guidance design.
\end{remark}

\begin{proposition}
The variables $(\beta_d, \phi_d)$ corresponding to the desired viewpoint at a line-of-sight angle $\theta_d$ are given by
\begin{equation}
   \beta_d = \sqrt{\frac{x_d^2}{a^2} + \frac{y_d^2}{b^2}}, \quad  
   \phi_d = \operatorname{atan2}\!\left(\frac{y_d}{b}, \frac{x_d}{a}\right)
   \label{main_phi_d}
\end{equation}
where \begin{equation}
x_d = R(\theta_d)\cos\theta_d, \quad 
y_d = R(\theta_d)\sin\theta_d
\label{des_cart_coord}
\end{equation} with
\begin{equation}
R(\theta_d)=
\frac{ab}{\sqrt{b^2\cos^2\theta_d+a^2\sin^2\theta_d}}
\label{gamma_def}
\end{equation}
\end{proposition}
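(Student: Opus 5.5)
The plan is to locate the desired viewpoint in Cartesian coordinates first and then map it into the $(\beta,\phi)$ representation. The viewpoint lies on the elliptical boundary $S$ and on the ray from O at line-of-sight angle $\theta_d$. So I will parametrize that ray as $(x,y)=(R\cos\theta_d,\,R\sin\theta_d)$ with $R\ge 0$, which is the polar description of the quadrotor position used in Fig.~\ref{geometry_engagment}. I will then find the unique $R>0$ for which the point satisfies $f(x,y)=0$, i.e.\ $\beta^2=1$ in Eq.~\eqref{beta_cov}.

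Substituting the ray into $x^2/a^2+y^2/b^2=1$ gives
\begin{equation*}
R^2\left(\frac{\cos^2\theta_d}{a^2}+\frac{\sin^2\theta_d}{b^2}\right)=1 .
\end{equation*}
Multiplying through by $a^2b^2$ gives $R^2\,(b^2\cos^2\theta_d+a^2\sin^2\theta_d)=a^2b^2$. Since $a,b>0$, the bracket is strictly positive for every $\theta_d$: the two terms cannot vanish simultaneously because $\cos^2+\sin^2=1$. Choosing the positive root, as required for a point on the ray, yields exactly Eq.~\eqref{gamma_def}, and hence Eq.~\eqref{des_cart_coord}. This also shows that the intersection of the ray with the ellipse is unique, so the viewpoint is well defined.

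With $(x_d,y_d)$ in hand, I will apply the change of variables Eqs.~\eqref{beta_cov} and \eqref{phi_cov} to this point. That gives Eq.~\eqref{main_phi_d} directly.

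As a consistency check I will note that $\beta_d=1$ by construction, matching the boundary value. I will also verify that $\phi_d$ returned by $\operatorname{atan2}$ is the correct elliptic angle. Because $R(\theta_d)>0$ and $a,b>0$, the pair $(y_d/b,\,x_d/a)$ has the same signs as $(\sin\theta_d,\cos\theta_d)$. So the four-quadrant inverse places $\phi_d$ in the same quadrant as $\theta_d$, and Eq.~\eqref{cov} reproduces $(x_d,y_d)$ with $\beta=\beta_d$.

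The only delicate point is this quadrant and sign bookkeeping: ensuring the positive root is chosen and that $\operatorname{atan2}$, rather than $\arctan$, recovers $\phi_d$ consistently with Eq.~\eqref{cov}. Otherwise the argument is direct substitution.
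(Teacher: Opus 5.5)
Your proposal is correct and follows the paper's route: substitute the ray $x=R\cos\theta_d$, $y=R\sin\theta_d$ into the ellipse equation, solve for $R(\theta_d)$, and map $(x_d,y_d)$ through the definitions of $\beta$ and $\phi$. Your additional checks are valid refinements that the paper leaves implicit: the denominator is strictly positive, the positive root is the right one, and $\operatorname{atan2}$ keeps $\phi_d$ in the quadrant of $\theta_d$, which gives $\beta_d=1$ and recovers $(x_d,y_d)$.
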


\begin{proof}
The Cartesian coordinates of a point in $\mathbb{R}^2$ with respect to O can be expressed as
\begin{equation}
x=R\cos\theta,\qquad
y=R\sin\theta
\label{polar_sub}
\end{equation}
Using Eq.~\eqref{polar_sub} and the ellipse equation $\frac{x^2}{a^2} + \frac{y^2}{b^2} = 1$ results in
\begin{equation}
R^2\left(\frac{\cos^2\theta}{a^2}
+\frac{\sin^2\theta}{b^2}\right)=1
\label{eq:ellipse1}
\end{equation}
Rearranging Eq.~\eqref{eq:ellipse1} yields
\begin{equation}
R(\theta)=
\frac{ab}
{\sqrt{b^2\cos^2\theta+a^2\sin^2\theta}}
\label{gamma_intermediate}
\end{equation}
Using Eq.~\eqref{gamma_intermediate} in \eqref{polar_sub} with $\theta=\theta_d$ leads to the desired Cartesian coordinates given in Eq.~\eqref{des_cart_coord}. Further, the variables $(\beta_d, \phi_d)$ are obtained using Eq.~\eqref{des_cart_coord} in Eqs.~\eqref{beta_cov} and \eqref{phi_cov}, respectively. 
\end{proof}

\subsection{Proposed Dynamics}
Consider the proposed $(\beta, \phi)$ position dynamics of the quadrotor as
\begin{align}
    \dot{\beta} &= \lambda_{\beta}(e^{-\beta} - \rho)    \label{desired_beta_dot} \\
    \dot{\phi} &=  \lambda_{\phi}  (\mu -  k \sin{\phi})  \label{desired_phi_dot} 
\end{align}
where $\rho \in \mathbb{R}^{+}$, $\mu \in \mathbb{R}$, and $k \in \{-1,0,1\}$ denote the elliptic radial parameter, bifurcation parameter, and constant, respectively. The parameters $\lambda_{\beta}, \lambda_{\phi} \in \mathbb{R}^{+}$ represent the radial control parameter and angular control coefficient, respectively. The stability analysis of fixed points of Eqs.~\eqref{desired_beta_dot} and \eqref{desired_phi_dot} is discussed subsequently.   

\subsubsection{ Elliptic Radial Dynamics}
The fixed point $\beta^*$ for which $\dot{\beta}=0$ is evaluated from  Eq.~\eqref{desired_beta_dot} as 
\begin{equation}
    \beta^* = -\ln{\rho}
    \label{beta_star}
\end{equation}
To analyze the stability property, the sign of $\textrm{d}{\dot{\beta}}/\textrm{d}{\beta}$ is determined at this fixed point as
\begin{equation}
    {\frac{\textrm{d}{\dot{\beta}}}{\textrm{d}{\beta}}} \Bigr\rvert_{\beta^*} = - {\rho}\lambda_{\beta} < 0 
    \label{stability_of_beta}
\end{equation}
The sign of Eq.~\eqref{stability_of_beta} shows that $\beta^*$ is a stable fixed point. The corresponding vector field plot of the proposed $\dot{\beta}$ dynamics is shown in Fig. \ref{fig:beta_vector_field}. For reaching the desired elliptical boundary, we set $\beta^* = \beta_d =1$ and hence the elliptic radial parameter can be determined using Eq.~\eqref{beta_star} as
\begin{equation}
    \rho = e^{-\beta_d}
    \label{p_parameter}
\end{equation}
On integrating  Eq.~\eqref{desired_beta_dot}, it yields
\begin{equation}
     \beta(t) = \ln \left( \frac{1 - e^{-\rho {\lambda_\beta} t}(1 - \rho e^{\beta_0})   }{\rho} \right) 
         \label{beta_exp}
\end{equation}
where $\beta(t=0) = \beta_0$. Using Eq.~\eqref{p_parameter} in Eq.~\eqref{beta_exp}, it follows that as $t \to \infty$, $\beta(t) \to \beta_d$, that is, the quadrotor converges to the desired elliptical boundary. Further, the parameter $\lambda_{\beta}$ in Eq.~\eqref{desired_beta_dot} can be used to control the settling time for reaching the stable fixed point $\beta^*$, which is defined subsequently.
\begin{figure}
    \centering
    \includegraphics[width=0.5\linewidth]{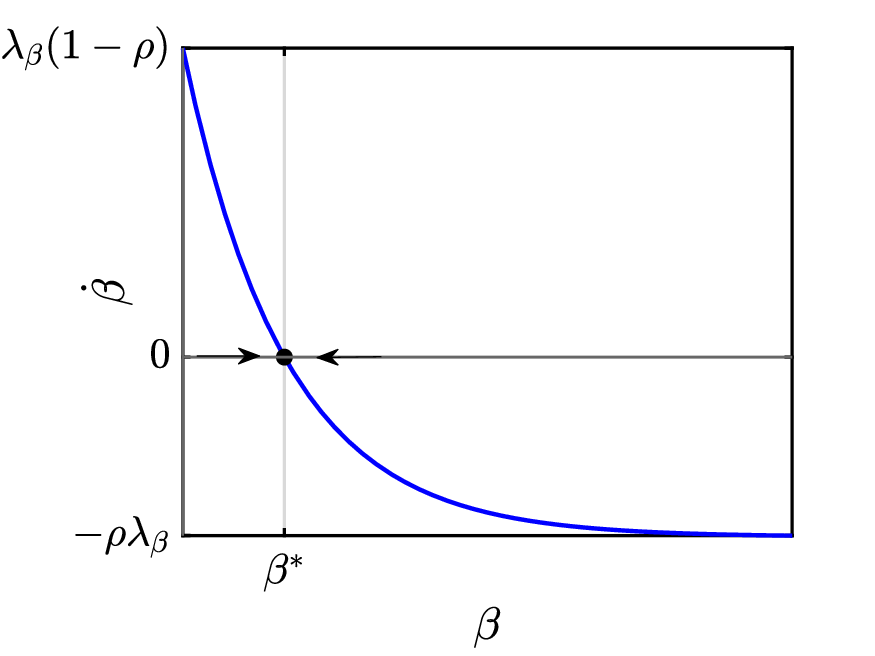}
    \caption{Vector field of the proposed elliptic radial dynamics.}
    \label{fig:beta_vector_field}
\end{figure}

\begin{definition}{(\cite{ogata2010modern})}
The \textit{settling time} $t_{s_\Gamma }$ of a system state $\Gamma (t)$ is defined as the time required for $\Gamma(t)$ to enter and remain within an $\epsilon_\Gamma$ neighborhood of a stable fixed point $\Gamma^*$, starting from the initial condition $\Gamma(0) = \Gamma_0$, that is, \label{definition_stat}
\begin{equation}
| \Gamma^* - \Gamma(t) | \leq \epsilon_\Gamma, \quad \forall t \geq t_{s_\Gamma}, \quad 0 < \epsilon_\Gamma \ll 1
\label{definition}
\end{equation}
\end{definition}

\begin{proposition}
   Given an initial value $\beta_0$ and a desired elliptical boundary $\beta_d$ corresponding to the fixed point $\beta^*$, the settling time $t_{s_\beta}$ for the dynamics in Eq.~\eqref{desired_beta_dot}, using Def.~\ref{definition_stat}, is given by
\begin{equation}
    t_{s_{\beta}} = -\frac{1}{\rho {\lambda_{\beta}}} \ln \left( \frac{1 - e^{{\epsilon_\beta}}}{1 - \rho e^{\beta_0}} \right),  \quad  \beta_0 > \beta_{d}
     \label{beta_dot_settling_time}
\end{equation}   
\end{proposition}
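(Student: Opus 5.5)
The plan is to work directly from the closed-form solution in Eq.~\eqref{beta_exp} and invert it at the boundary of the $\epsilon_\beta$ neighbourhood of Def.~\ref{definition_stat}. First I will rewrite Eq.~\eqref{beta_exp} relative to the fixed point. Using Eq.~\eqref{p_parameter}, i.e.\ $-\ln\rho=\beta_d=\beta^*$, and splitting the logarithm of the quotient, I get
\begin{equation*}
\beta(t)-\beta_d=\ln\!\left(1-e^{-\rho\lambda_\beta t}\,(1-\rho e^{\beta_0})\right).
\end{equation*}
This expresses the deviation from $\beta^*$ as an explicit function of $t$.

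Next I will establish monotonic convergence from above, so that the condition $|\beta^*-\beta(t)|\le\epsilon_\beta$ for all $t\ge t_{s_\beta}$ reduces to a single crossing time. Since $\beta_0>\beta_d$, we have $\rho e^{\beta_0}=e^{\beta_0-\beta_d}>1$. Hence the constant $c:=1-\rho e^{\beta_0}$ is negative. It follows that $1-e^{-\rho\lambda_\beta t}c$ exceeds $1$ for all $t$ and decreases strictly to $1$.

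The same conclusion follows from the vector field. By Eq.~\eqref{desired_beta_dot}, $\dot\beta=\lambda_\beta(e^{-\beta}-\rho)<0$ whenever $\beta>\beta^*$, and trajectories cannot cross the fixed point. Therefore $\beta(t)-\beta_d$ is positive and strictly decreasing to $0$. Once it reaches $\epsilon_\beta$, it stays within the neighbourhood for all later times.

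It then remains to solve $\beta(t_{s_\beta})-\beta_d=\epsilon_\beta$. Exponentiating gives $1-e^{-\rho\lambda_\beta t_{s_\beta}}c=e^{\epsilon_\beta}$, so $e^{-\rho\lambda_\beta t_{s_\beta}}=(1-e^{\epsilon_\beta})/(1-\rho e^{\beta_0})$. Taking logarithms and dividing by $-\rho\lambda_\beta$ yields Eq.~\eqref{beta_dot_settling_time}.

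The only delicate point, and the step I expect to need care, is checking that this logarithm is well defined and that $t_{s_\beta}>0$. The numerator $1-e^{\epsilon_\beta}$ and the denominator $c$ are both negative, so their ratio is positive. The ratio is less than $1$ exactly when $e^{\epsilon_\beta}-1<\rho e^{\beta_0}-1$, i.e.\ when $\epsilon_\beta<\beta_0-\beta_d$. This holds because the initial point lies outside the $\epsilon_\beta$ neighbourhood; otherwise the settling time is trivially zero. I will state this restriction alongside $\beta_0>\beta_d$ and $0<\epsilon_\beta\ll1$.
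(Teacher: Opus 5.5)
Your proposal is correct and takes the same route as the paper: substitute the closed-form solution Eq.~\eqref{beta_exp}, with $\rho=e^{-\beta_d}$, into the condition of Def.~\ref{definition_stat} and solve $\beta(t_{s_\beta})-\beta_d=\epsilon_\beta$ for $t_{s_\beta}$. You also make explicit two points the paper's one-line proof leaves implicit: $\beta(t)$ decreases strictly to $\beta_d$ from above, so the crossing time is the settling time, and $\epsilon_\beta<\beta_0-\beta_d$ is needed for the logarithm to give a positive, well-defined $t_{s_\beta}$.
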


\begin{proof}
The result follows directly by using Eq.~\eqref{beta_exp} in Eq.~\eqref{definition}, wherein solving the equality yields the expression for $t_{s_{\beta}}$.    
\end{proof}
Using Eq.~\eqref{beta_dot_settling_time}, the radial control parameter $\lambda_{\beta}$ can be expressed in terms of the desired settling time as
\begin{equation}
    {\lambda_{\beta}}  = -\frac{1}{\rho { t_{s_{\beta}} }} \ln \left( \frac{1 - e^{{\epsilon_\beta}}}{1 - \rho e^{\beta_0}} \right)
    \label{lambda_beta_par}
\end{equation}
Here, $\rho$ in Eq.~\eqref{lambda_beta_par} is computed using Eq.~\eqref{p_parameter}.

\subsubsection{ Elliptic Angle Dynamics}
 Using Eq.~\eqref{desired_phi_dot} and evaluating $\dot{\phi} = 0$ when $ k \in \{-1, 1 \}$, the fixed points $\phi^*_1$ and $\phi^*_2$ are obtained as
\begin{align}
    \phi^*_1 =  \sin^{-1}\left( \frac{\mu}{k} \right), \quad  \phi^*_2 = \pi - \sin^{-1}\left( \frac{\mu}{k} \right) 
    \label{phi_fixed_points}
\end{align}
 The sign of  $\textrm{d}{\dot{\phi}}/\textrm{d}{\phi}$ is evaluated to analyze the stability of these fixed points using Eq.~\eqref{desired_phi_dot} as
\begin{align}
     {\frac{\textrm{d}{\dot{\phi}}}{\textrm{d}{\phi}}} \Bigr\rvert_{\phi^* } = -k{\lambda_{\phi}} \cos{\phi^*}
    \label{stability_of_phi}
\end{align}
Further analysis considers the two possible values of $k$.

\noindent\textit{2.1~ Stability of $\phi^*_1$ and $\phi^*_2$ for $k=1$}   \hfill\break
Using Eqs.~\eqref{phi_fixed_points} and \eqref{stability_of_phi} with $k=1$, 
\begin{align}
     {\frac{\textrm{d}{\dot{\phi}}}{\textrm{d}{\phi}}} \Bigr\rvert_{\phi^*_1 } = -{\lambda_{\phi}}\sqrt{1 - \mu^2}, \quad {\frac{\textrm{d}{\dot{\phi}}}{\textrm{d}{\phi}}} \Bigr\rvert_{\phi^*_2 } = {\lambda_{\phi}}\sqrt{1 - \mu^2}
    \label{stability_of_phi_k1}
\end{align}
For $\mu \in (-1,1)$, Eq.~\eqref{stability_of_phi_k1} shows that $\phi^*_1$ is a stable fixed point and it varies between $-\pi/2$ to $\pi/2$, that is, spanning quadrants first and fourth,  while $\phi^*_2$ is an unstable fixed point. Also, using Eq.~\eqref{phi_fixed_points}, at $\mu = \pm 1$, both fixed points merge, and they represent a half-stable fixed point, that is,
\begin{equation}
    \phi^* = \begin{cases}
        \frac{\pi}{2}, & \text{if } \mu = 1 \text{ and } k = 1 \\
        -\frac{\pi}{2}, & \text{if } \mu = -1 \text{ and } k = 1
    \end{cases}
    \label{k_1_90case}
\end{equation}
Figures \ref{vector_field_phi}(a) and (b) represent the half-stable fixed points for $k=1$. The associated bifurcation diagram is shown in Fig. \ref{bif_dia}(a). The bifurcation parameter is designed in such a way that a stable fixed point corresponds to the desired elliptic angle $\phi_d$, that is,  $\phi^*_1 = \phi_d$. Accordingly, using Eq.~\eqref{phi_fixed_points}, 
 \begin{align}
     \mu = k\sin{\phi_d}, \quad k = 1
     \label{mu_parameter_p1}
 \end{align}

\begin{figure}
	\centering
	\subfloat[{$\mu = -1$ and $k = 1$ }]
	{\includegraphics[width=0.45\textwidth]{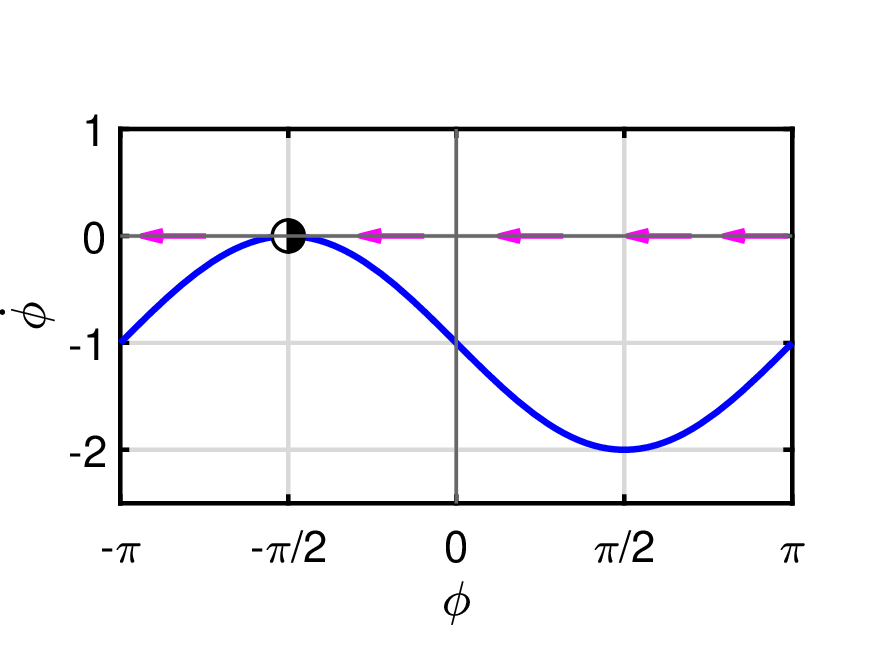}}
		\subfloat[{$\mu = 1$ and $k = 1$ }]
	{\includegraphics[width=0.45\textwidth]{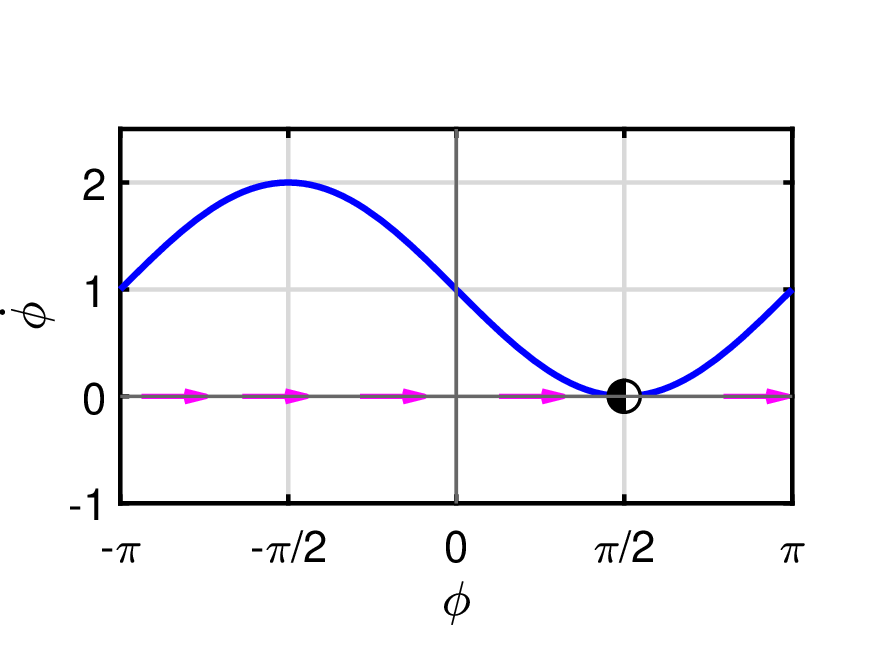}}\\
	\subfloat[{$\mu = -1$ and $k = -1$ }]
	{\includegraphics[width=0.45\textwidth]{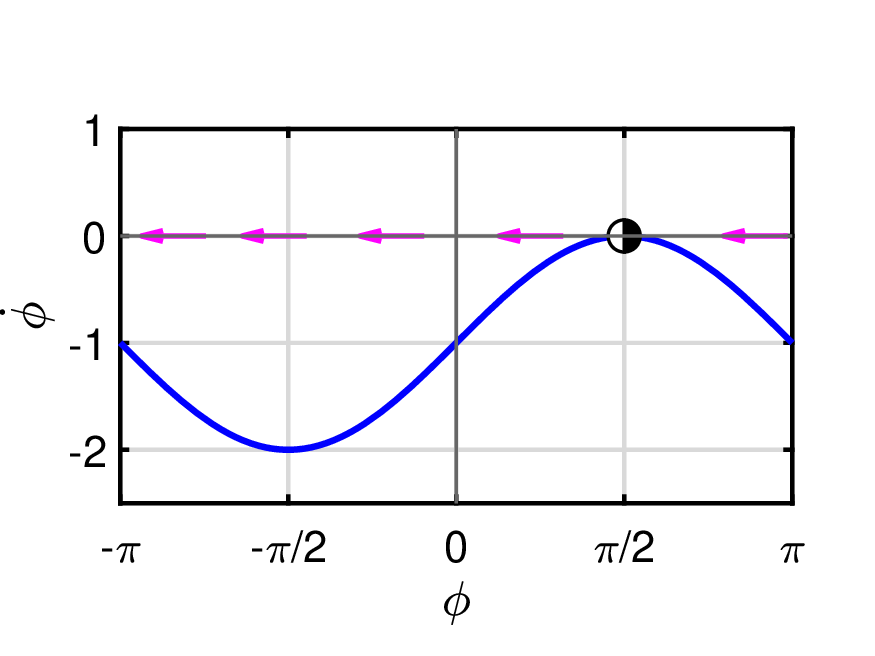}}
		\subfloat[{$\mu = 1$ and $k = -1$ }]
	{\includegraphics[width=0.45\textwidth]{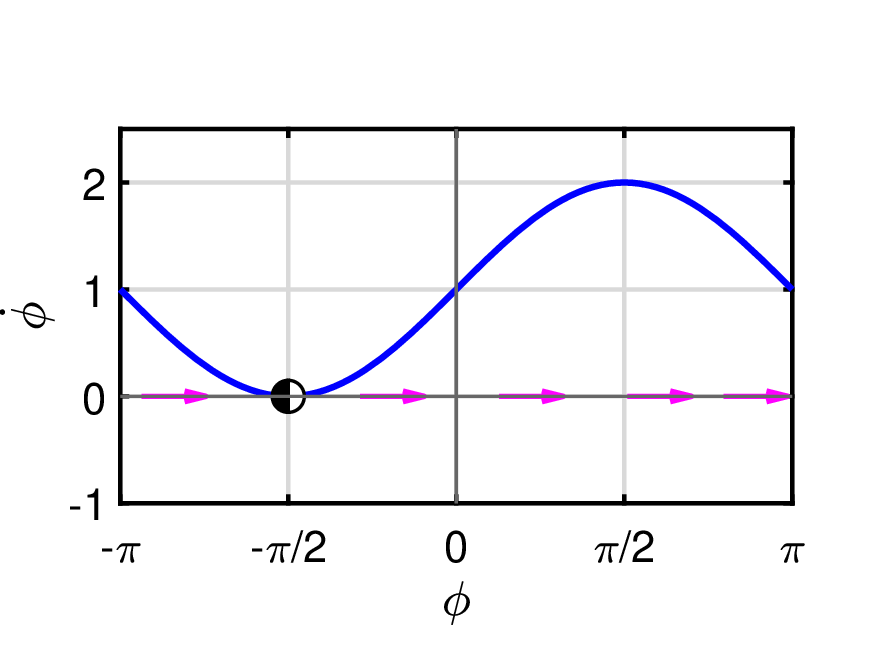}}
	\caption{Vector field for elliptic angle dynamics.}
	 \label{vector_field_phi}
\end{figure}

\begin{figure}
    \centering
    \subfloat[\centering $k = 1$]{{\includegraphics[width=0.5\textwidth]{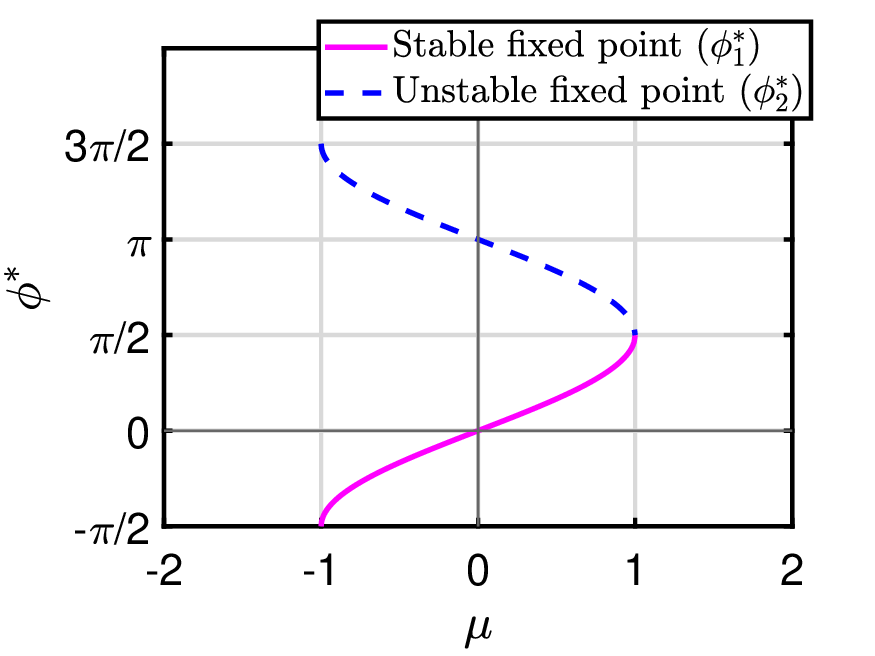} }}
    \subfloat[\centering $k = -1$ ]{{\includegraphics[width=0.5\textwidth]{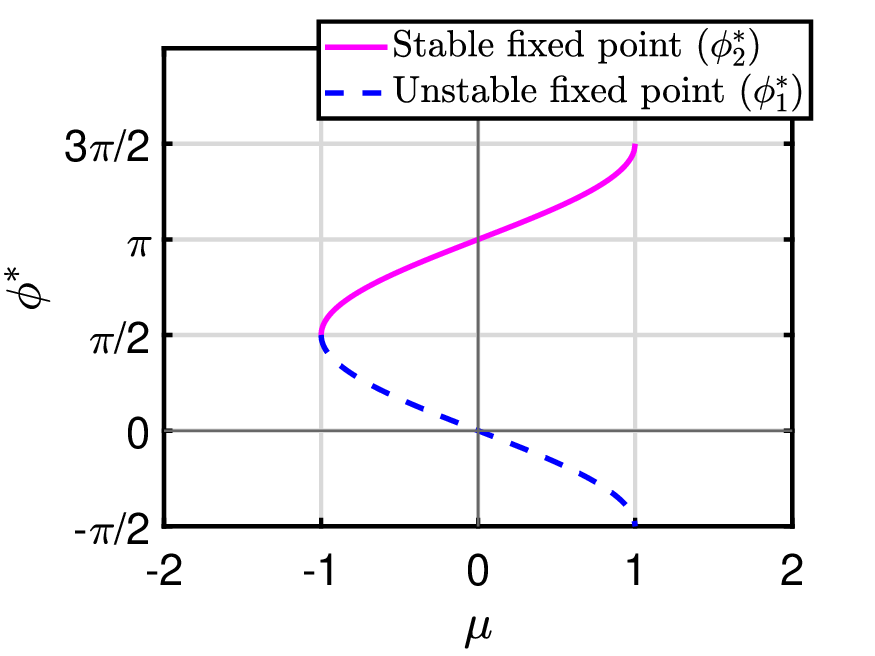} }}
    \caption{ Bifurcation diagram.}
    \label{bif_dia}
\end{figure}

\noindent\textit{2.2~ Stability of $\phi^*_1$ and $\phi^*_2$ for $k= -1$}   \hfill\break
Similar to case 2.1, using Eq.~\eqref{phi_fixed_points} in Eq.~\eqref{stability_of_phi} with $k=-1$ yields
\begin{align}
     {\frac{\textrm{d}{\dot{\phi}}}{\textrm{d}{\phi}}} \Bigr\rvert_{\phi^*_1 } = {\lambda_{\phi}}\sqrt{1 - \mu^2}, \quad {\frac{\textrm{d}{\dot{\phi}}}{\textrm{d}{\phi}}} \Bigr\rvert_{\phi^*_2 } = -{\lambda_{\phi}}\sqrt{1 - \mu^2}
    \label{stability_of_phi_kM1}
\end{align}
Equation \eqref{stability_of_phi_kM1} shows that for  $\mu \in (-1,1)$,  $\phi^*_1$ is an unstable fixed point while $\phi^*_2$ is a stable fixed point that varies from $\pi/2$ to $3\pi/2$,that is, quadrants second and third.  A half-stable fixed point is obtained at $\mu = \pm 1$ using Eq.~\eqref{phi_fixed_points} as
\begin{equation}
    \phi^* = \begin{cases}
        -\frac{\pi}{2}, & \textrm{if } \mu = 1 \textrm{ and } k = -1 \\
        \frac{\pi}{2}, & \textrm{if } \mu = -1 \textrm{ and } k = -1
    \end{cases}
    \label{k_m_1_90case}
\end{equation}
The corresponding half-stable fixed points for $k=-1$ are depicted in Figs.~\ref{vector_field_phi}(c) and (d). The associated bifurcation diagram is shown in Fig.~\ref{bif_dia}(b). To reach the desired elliptic angle, we determine the bifurcation parameter using Eq.~\eqref{phi_fixed_points} as $\phi_d = \phi^*_2 $, which leads to 
\begin{align}
    \mu = k\sin{\phi_d}, \quad k = -1
    \label{mu_parameter_m1}
\end{align}
The quadrant wise values of $k$ and $\mu$ are summarized in Fig. \ref{fig:bif_par_quad_wise}.
\begin{figure}
    \centering
    \includegraphics[width=0.45\linewidth]{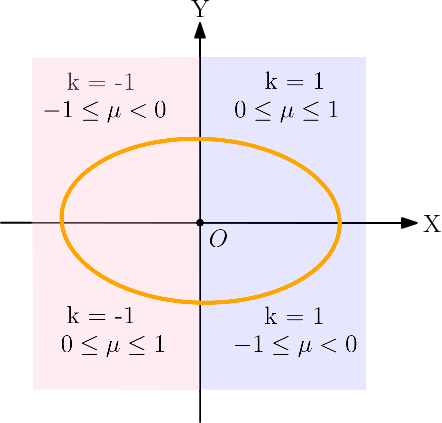}
    \caption{Values of $k$ and $\mu$ corresponding to stable fixed points of $\dot{\phi}$ located in specific quadrants.}
    \label{fig:bif_par_quad_wise}
\end{figure}
On integrating Eq.~\eqref{desired_phi_dot} when $ k \in \{-1, 1 \}$, it yields 
\begin{equation}
\phi(t)=
\begin{cases}
2\tan^{-1}\!\left(
\frac{m_1\eta(t)-c_1}
{\eta(t)-c_2}
\right),
& \text{if } \mu\in[-1,1]\setminus\{0\}\\[1ex]
2\tan^{-1}\!\left(
\tan\!\left(\frac{\phi_0}{2}\right)
e^{-\lambda_{\phi}kt}
\right),
& \text{if } \mu=0
\end{cases}
\label{phi(t) expression}
\end{equation}
where $\eta(t)=e^{-\lambda_{\phi}t\sqrt{k^{2}-\mu^{2}}}$,
$c_1=\frac{m_2m_4}{m_3}$,
$c_2=\frac{m_4}{m_3}$,
$m_1=\frac{k}{\mu}+\sqrt{\frac{k^2}{\mu^2}-1}$,
$m_2=\frac{k}{\mu}-\sqrt{\frac{k^2}{\mu^2}-1}$,
$m_3=\tan\!\left(\frac{\phi_0}{2}\right)-m_2$,
$m_4=\tan\!\left(\frac{\phi_0}{2}\right)-m_1$,
and $\phi(0)=\phi_0$. Using basic trigonometric identities, Eq.~\eqref{phi(t) expression} implies that $\lim_{t \to \infty} \phi(t) = \phi_d$ except when $\phi_0$ coincides with the unstable fixed point.

\begin{proposition}\label{prop3}
For the dynamics given in Eq.~\eqref{desired_phi_dot} and a given initial elliptic angle $\phi_0$, the settling time $t_{s_{\phi}}$ required for $\phi(t)$ to enter and remain within an $\epsilon_{\phi}$ neighborhood of the stable fixed point $\phi^*$ is given as follows.

If $\mu \in (-1,1)\setminus\{0\}$ and $\phi_0 \in \mathbb{S}_\pm$, then
{\small
\begin{equation}
\begin{aligned}
t_{s_\phi}
&= - \frac{1}{\lambda_\phi \sqrt{k^2-\mu^2}}
\ln \left(
\frac{ m_4 \left[ \tan \left(\tan^{-1}(m_2) \mp \frac{\epsilon_\phi}{2} \right) - m_2 \right] }
     { m_3 \left[ \tan \left(\tan^{-1}(m_2) \mp \frac{\epsilon_\phi}{2} \right) - m_1 \right] }
\right)
\end{aligned}
\label{phi_settling_a}
\end{equation}
}
elseif $\mu = 0$ and $\phi_0 \in \mathbb{S}_\pm$, then
\begin{equation} \label{phi_settling_b}
\begin{aligned}
    t_{s_\phi} &= - \frac{1}{\lambda_\phi k} 
    \ln \left( \frac{\tan((\phi^* \mp \epsilon_\phi)/2)}{\tan(\phi_0/2)} \right)
\end{aligned}
\end{equation}
Here the sets $\mathbb{S}_{+} := \{\phi_0 \mid \dot{\phi}(0) > 0\}$ and 
$\mathbb{S}_{-} := \{\phi_0 \mid \dot{\phi}(0) < 0\}$, and using Eqs.~\eqref{mu_parameter_p1} and \eqref{mu_parameter_m1}, the stable fixed point is  $\phi^{*} = 0$ for $k=1$ and $\phi^{*} = \pi$ for $k=-1$.
\end{proposition}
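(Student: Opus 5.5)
The plan follows the one used for the settling time of $\beta$: substitute the closed-form solution \eqref{phi(t) expression} into the defining condition \eqref{definition} of Def.~\ref{definition_stat}, with equality at the boundary of the $\epsilon_\phi$ neighborhood, and solve for $t$. For $\mu\in(-1,1)\setminus\{0\}$ and $k\in\{-1,1\}$, first identify which of $m_1,m_2$ encodes the stable point. Since $\tan(\phi/2)$ satisfies $\mu w^2-2kw+\mu=0$ at the fixed points, its two roots are $m_1$ and $m_2$.

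Next, evaluate \eqref{phi(t) expression} at $t\to\infty$, where $\eta\to0$. This gives $\tan(\phi/2)\to c_1/c_2=m_2$, so the stable fixed point is $\phi^*=2\tan^{-1}(m_2)$, consistent with $\phi^*_1$ for $k=1$ and $\phi^*_2$ for $k=-1$ (modulo $2\pi$).

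\emph{Monotone approach.} Because \eqref{desired_phi_dot} is a scalar autonomous ODE, $\dot\phi$ keeps its initial sign until the fixed point is reached. So $\phi(t)$ approaches $\phi^*$ monotonically: from below if $\phi_0\in\mathbb{S}_+$ and from above if $\phi_0\in\mathbb{S}_-$. The first time $|\phi^*-\phi(t)|=\epsilon_\phi$ therefore fixes $t_{s_\phi}$, and the trajectory stays inside the neighborhood afterwards. The boundary value is $\phi(t_{s_\phi})=\phi^*\mp\epsilon_\phi$, that is, $\tan(\phi/2)=\tan(\tan^{-1}(m_2)\mp\epsilon_\phi/2)=:T$.

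\emph{Inverting the solution.} Setting $(m_1\eta-c_1)/(\eta-c_2)=T$ and solving the linear fractional equation gives
\begin{equation*}
\eta=\frac{c_1-Tc_2}{m_1-T}=\frac{m_4(m_2-T)}{m_3(m_1-T)}=\frac{m_4(T-m_2)}{m_3(T-m_1)} .
\end{equation*}
Taking logarithms with $\eta=e^{-\lambda_\phi t\sqrt{k^2-\mu^2}}$ yields \eqref{phi_settling_a}.

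For $\mu=0$, the stable point is $\phi^*=0$ (for $k=1$) or $\pi$ (for $k=-1$). Here I will set $\tan(\phi/2)=\tan(\phi_0/2)e^{-\lambda_\phi k t}$ equal to $\tan((\phi^*\mp\epsilon_\phi)/2)$ and take logarithms, which gives \eqref{phi_settling_b}. The same monotonicity argument selects the sign.

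\emph{Main obstacle.} The delicate step is branch bookkeeping. The map $\phi\mapsto\tan(\phi/2)$ must be monotone and single-valued along the whole trajectory, and the logarithm arguments must be positive. This can fail near $\phi^*=\pi$, where $\tan(\phi/2)$ blows up; for $\mu=0$, $k=-1$, the expression in \eqref{phi_settling_b} would then have to be read with $\phi$ taken modulo $2\pi$. I would handle this by using that $\phi$ and $\phi^*$ lie in the same open interval between consecutive unstable fixed points. On that interval, $\tan(\phi/2)$ (or $\cot$, if needed) is a strictly monotone diffeomorphism. This guarantees that $\eta(t)\in(0,1]$ is hit exactly once, so the ratio inside the logarithm lies in $(0,1]$ and $t_{s_\phi}\ge 0$ whenever $\phi_0$ lies outside the $\epsilon_\phi$-neighborhood.
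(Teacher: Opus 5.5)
You follow the paper's own route from the Appendix: substitute the closed form for $\phi(t)$ into the boundary equality of Definition~\ref{definition_stat} and invert the linear fractional map in $\eta$. Your inversion $\eta=m_4(T-m_2)/[m_3(T-m_1)]$ is right. Your monotonicity argument for the sign choice and for ``remains within'' improves on the paper's ``on simplifying'', and for $\mu>0$ your proof is complete.

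The step that fails is one you assert without checking: that $\eta\to0$ gives $\phi^*=2\tan^{-1}(m_2)$, ``consistent with $\phi^*_1$ for $k=1$ and $\phi^*_2$ for $k=-1$''. With $w=\tan(\phi/2)$, Eq.~\eqref{desired_phi_dot} becomes $\dot w=\tfrac{\lambda_\phi\mu}{2}(w-m_1)(w-m_2)$. The closed form, however, satisfies $(w-m_1)/(w-m_2)=c_2/\eta$. Differentiating this, using $m_1-m_2=2\sqrt{k^2-\mu^2}/|\mu|$, gives $\dot w=\tfrac{\lambda_\phi|\mu|}{2}(w-m_1)(w-m_2)$. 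So for $\mu<0$ the closed form solves the time-reversed equation, and its limit $m_2$ is the \emph{unstable} root.

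For instance, take $k=1$, $\mu=-\tfrac12$. Then $m_1=\sqrt3-2=\tan(-\pi/12)$ and $m_2=-2-\sqrt3=\tan(7\pi/12)$, so the stable point $\phi^*_1=-\pi/6$ corresponds to $m_1$. Starting from $\phi_0=0$, the closed form drives $\phi$ upward toward $7\pi/6$ (mod $2\pi$), even though $\dot\phi(0)=-\lambda_\phi/2<0$.

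Consequently, your claim that the argument of the logarithm lies in $(0,1]$ is also false for $\mu<0$. In the paper's Case~1 ($\phi_0=53.13~\mathrm{deg}$, $k=-1$, $\mu=-0.4792$, $\phi_0\in\mathbb{S}_+$), $m_4/m_3<0$ while $(T-m_2)/(T-m_1)>0$. So \eqref{phi_settling_a} asks for the logarithm of a negative number. The paper's appendix makes the identical identification $\phi(t\to\infty)=2\tan^{-1}(m_2)$, so you inherited this error rather than introduced it. Still, it means the formula as literally stated cannot be proved for $\mu<0$.

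The repair is to define $m_{1,2}:=(k\pm\sqrt{k^2-\mu^2})/\mu$. This coincides with the paper's definition for $\mu>0$ and exchanges $m_1\leftrightarrow m_2$ (hence $m_3\leftrightarrow m_4$) for $\mu<0$. Then $m_2$ is the stable root for every $\mu\neq0$, the closed form genuinely solves the ODE, and your inversion goes through verbatim. This corrected formula reproduces the paper's reported $\lambda_\phi=0.6369$ for Case~1, so it is evidently what was used numerically.

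Your branch-bookkeeping worry is minor by comparison. The M\"obius form already carries $w$ through $\infty$ when $\phi$ crosses $\pi$, and for $\mu=0$ the targets $\tan((\pi\mp\epsilon_\phi)/2)$ are finite.
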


\begin{proof}
    See Appendix.  
\end{proof}
Note that angular control coefficient $\lambda_{\phi}$ can be explicitly expressed in terms of desired settling time $t_{s_\phi}$ using Eq.~\eqref{phi_settling_a} or \eqref{phi_settling_b}.

\begin{remark}
  For $|\mu| > 1$, the elliptic angle dynamics described by Eq.~\eqref{desired_phi_dot} possesses no fixed points. In the case $k = 1$, as $\mu$ varies from $-\infty$ to $\infty$, a half-stable fixed point appears at $\phi = -\pi/2$ when $\mu = -1$. This fixed point subsequently bifurcates into a pair of fixed points one stable and one unstable within the interval $\mu \in (-1, 1)$. At $\mu = 1$, these fixed points coalesce into another half-stable fixed point at $\phi = \pi/2$, as depicted in Fig. \ref{bif_dia}(a). An analogous reverse behavior occurs for $k = -1$, as shown in Fig. \ref{bif_dia}(b). Consequently, the dynamics defined by Eqs.~\eqref{desired_beta_dot} and \eqref{desired_phi_dot} exhibit saddle-node infinite period bifurcation \cite{strogartz1994nonlinear}.
\end{remark}

\section{Inspection Tasks}\label{sec:inspection tasks}
Using the analytical results derived in Sec.~\ref{sec:proposed method}, this section formulates guidance logic for four distinct inspection tasks along the boundary.

\subsection{Reaching a Viewpoint} \label{reach_vp}
In this scenario, the quadrotor is required to arrive and hover at a desired viewpoint, which is defined by the line-of-sight angle $\theta_d$, on the desired elliptical boundary ($\beta_d = 1$), starting from an initial position outside the elliptical boundary within a specified time $T_{vp}$. For this objective, the elliptic radial parameter $\rho$ is directly obtained using Eq.~\eqref{p_parameter}. The desired elliptic angle $\phi_d$ corresponding to $\theta_d$ is computed using Eq.~\eqref{main_phi_d}, while the radial control parameter $\lambda_{\beta}$ and angular control coefficient $\lambda_{\phi}$ are computed using $T_{vp}$ as $t_{s_{\beta}}$ in Eq.~\eqref{lambda_beta_par} and $T_{vp}$ as $t_{s_{\phi}}$ in Eq.~\eqref{phi_settling_a} or \eqref{phi_settling_b}, respectively. Further, depending on the quadrant in which $\phi_d$ lies, the bifurcation parameter $\mu$ is computed using Eqs.~\eqref{k_1_90case}, \eqref{mu_parameter_p1}, \eqref{k_m_1_90case}, and \eqref{mu_parameter_m1} as
\begin{equation}
\begin{aligned}
\mu &= k\,\sin \phi_d, \;\text{where}\;
k =
\begin{cases}
1, & \text{for } \phi_d \in \left(-\frac{\pi}{2},\,\frac{\pi}{2}\right) \\
-1, & \text{for } \phi_d \in \left(\frac{\pi}{2},\,\frac{3\pi}{2}\right) \\
\pm 1, & \text{for } \phi_d \in \left\{-\frac{\pi}{2},\,\frac{\pi}{2}\right\}
\end{cases}
\end{aligned}
\label{mu_parameter_for_all_quadrants}
\end{equation}

\subsection{Switching Between Viewpoints} \label{switch_vp}
Consider that the quadrotor is initially located at a viewpoint, say A, as shown in Fig.~\ref{geometry_engagment}, arriving there using the method described in Sec.~\ref{reach_vp}. It is desired that the quadrotor now flies along the elliptical boundary to another viewpoint (for example point B in Fig.\ref{geometry_engagment}) specified by, say $\theta_{d_B}$, in a desired time duration $T_{sw}$. Since the quadrotor's initial position is on the boundary, the elliptic radial parameter $\rho= e^{-\beta_d}$  ensures that it remains on it. The elliptic angle $\phi_{d_B}$ corresponding to $\theta_{d_B}$ is obtained using Eq.~\eqref{main_phi_d}. The required bifurcation parameter $\mu$ is computed using $\phi_d = \phi_{d_B}$ in Eq.~\eqref{mu_parameter_for_all_quadrants}. The angular control coefficient $\lambda_{\phi}$ is obtained using $\mu$ and $t_{s_{\phi}}= T_{sw}$ in Eq.~\eqref{phi_settling_a} or \eqref{phi_settling_b}. 

\subsection{Dwell Time Constrained Elliptical Circumnavigation}
This task enables continuous inspection along the elliptical boundary by incorporating observation requirements for specific regions of the structure. For example, the quadrotor is required to dwell for a duration $T_{dw}$ within the sector defined by $\theta_1 = \theta_d - \delta$ and $\theta_2 = \theta_d + \delta$, centered at the desired viewpoint $\theta_d$. The problem can be defined as $\int_{\theta(t) \in [\theta_d - \delta,\, \theta_d + \delta]} \! \textrm{d}t = T_{dw} $. The proposed guidance structure is inherently amenable to such a requirement. In Eq.~\eqref{desired_phi_dot}, it follows that for  $ |\mu|>1 $ there are no fixed points of $\dot{\phi}$, implying that $\phi$ varies continuously resulting in the emergence of a stable limit cycle. Figure \ref{VF_k1_MU1.5}(a) illustrates the counter-clockwise stable limit cycle corresponding to $\mu > 1$, while Fig. \ref{VF_k1_MU1.5}(b) depicts the clockwise limit cycle for $\mu < -1$ with $k =1$.
\begin{figure}
    \centering
    \subfloat[\centering $\mu = 1.1$ ]{{\includegraphics[width=0.37\textwidth]{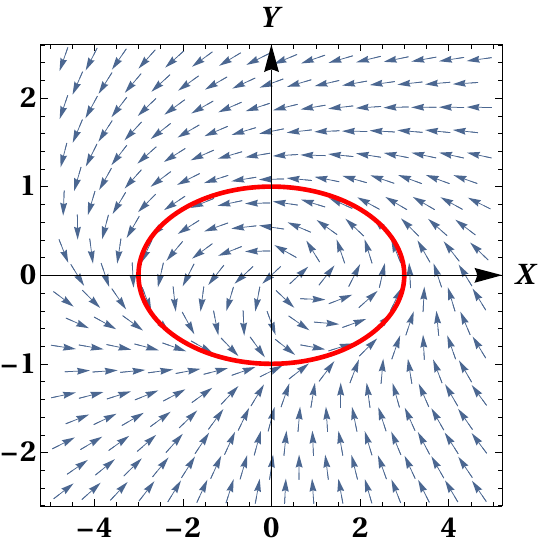} }}
    \subfloat[\centering $\mu = -1.1$]{{\includegraphics[width=0.37\textwidth]{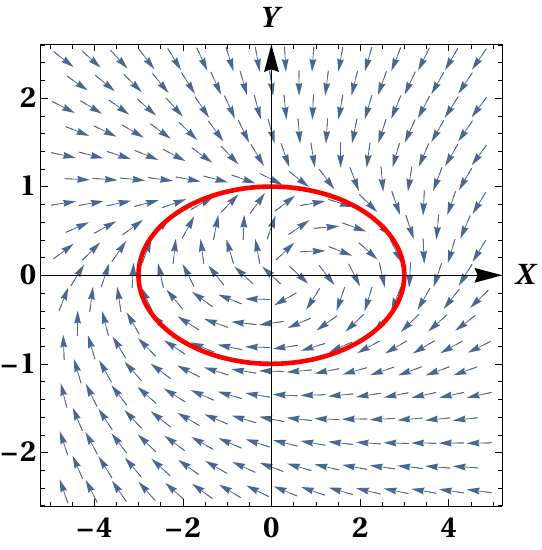} }}
    \caption{ Phase portrait of the proposed dynamics for $k =1$ when $|\mu| > 1$.}
    \label{VF_k1_MU1.5}
\end{figure}
To achieve the dwell-time feature, the bifurcation parameter $\mu$ is computed using Eq.~\eqref{desired_phi_dot} as
\begin{equation}
    f(\mu) = \lambda_{\phi}\,T_{dw} - \int_{\phi_1}^{\phi_2} 
    \frac{\mathrm{d}\phi}{\mu - k \sin\phi} = 0,\quad \textrm{where} \quad |\mu|>1
    \label{mu_dwell}
\end{equation}
wherein, $\phi_1$ and $\phi_2$ are obtained using the sector boundaries $\theta_1$ and $\theta_2$ in Eq.~\eqref{main_phi_d}, respectively. Equation \eqref{mu_dwell} is a nonlinear transcendental equation in the scalar unknown $\mu$ and cannot be solved in closed form. Accordingly, $\mu$ can be determined numerically. In that context, the existence and uniqueness of solution $\mu$ of Eq.~\eqref{mu_dwell} is important and it is established in Proposition~\ref{mu_existance}.

\begin{proposition}\label{mu_existance}
For given $\lambda_{\phi}>0$, $k>0$, $T_{dw}>0$, and
constants $\phi_2>\phi_1$, Eq.~\eqref{mu_dwell} has exactly
one feasible solution $\mu^\ast$, which satisfies
$\mu^\ast>kM$, where $M:=\max_{\phi\in[\phi_1,\phi_2]}\sin\phi.$
\end{proposition}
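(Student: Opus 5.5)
Define $g(\mu):=\int_{\phi_1}^{\phi_2}\frac{\mathrm{d}\phi}{\mu-k\sin\phi}$, so that Eq.~\eqref{mu_dwell} reads $f(\mu)=\lambda_\phi T_{dw}-g(\mu)=0$. The plan is to restrict attention to the feasible set $\mu>kM$. This is the range on which the integrand of Eq.~\eqref{mu_dwell} is strictly positive and continuous on $[\phi_1,\phi_2]$, so $\dot\phi>0$ throughout the sector and the quadrotor actually traverses it counter-clockwise. Feasibility of a solution is understood in this sense. On $(kM,\infty)$ I will then show three things: $g$ is continuous and strictly decreasing, $g(\mu)\to 0$ as $\mu\to\infty$, and $g(\mu)\to+\infty$ as $\mu\to kM^+$. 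Since $\lambda_\phi T_{dw}>0$, the intermediate value theorem together with strict monotonicity then gives exactly one root $\mu^\ast>kM$. If the paper's constraint $|\mu|>1$ is to be respected, note that $kM\le 1$. When $M=1$ the root automatically satisfies $\mu^\ast>1$. When $M<1$ one should instead restrict to $\mu>1$ and use the finite value $g(1)$; I would remark that feasibility then requires $\lambda_\phi T_{dw}<g(1)$. I will primarily present the $\mu>kM$ statement as in the proposition.

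Continuity and monotonicity come from differentiation under the integral sign. For $\mu>kM$ the integrand and its $\mu$-derivative $-(\mu-k\sin\phi)^{-2}$ are continuous on the compact set $[\phi_1,\phi_2]\times[\mu_0,\infty)$ for any $\mu_0>kM$. This gives
\begin{equation*}
g'(\mu)=-\int_{\phi_1}^{\phi_2}\frac{\mathrm{d}\phi}{(\mu-k\sin\phi)^2}<0 .
\end{equation*}
For the limit at infinity, use the bound $0<g(\mu)\le(\phi_2-\phi_1)/(\mu-kM)$, which tends to $0$.

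The main obstacle is the blow-up $g(\mu)\to\infty$ as $\mu\downarrow kM$. By monotone convergence, $g(\mu)\uparrow\int_{\phi_1}^{\phi_2}\frac{\mathrm{d}\phi}{k(M-\sin\phi)}$, so it suffices to show this limiting integral diverges. Let $\bar\phi\in[\phi_1,\phi_2]$ be a maximizer of $\sin\phi$. There are two cases.

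If $\bar\phi$ is an endpoint, say $\bar\phi=\phi_2$, and $\cos\bar\phi\neq0$, then $M-\sin\phi\approx\cos\bar\phi\,(\bar\phi-\phi)$ near $\bar\phi$, and the integral diverges logarithmically. This is non-integrable like $1/x$.

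If instead $\cos\bar\phi=0$ (interior maximum at $\pi/2$, i.e.\ $M=1$), then $M-\sin\phi\approx\tfrac12(\phi-\bar\phi)^2$, and the integrand behaves like $2/(\phi-\bar\phi)^2$, which again diverges.

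In both cases, a local Taylor bound $M-\sin\phi\le C|\phi-\bar\phi|$ on a one-sided neighborhood of $\bar\phi$ suffices. This bound holds since $\sin$ is Lipschitz with constant $1$, so we may take $C=1$. It gives $\int \mathrm{d}\phi/(k|\phi-\bar\phi|)=\infty$ directly, without any case split, and I would use that cleaner argument.

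Combining the three properties, $f$ is continuous and strictly increasing on $(kM,\infty)$, with $f\to-\infty$ as $\mu\to kM^+$ and $f\to\lambda_\phi T_{dw}>0$ as $\mu\to\infty$. Hence there is a unique $\mu^\ast>kM$ with $f(\mu^\ast)=0$.
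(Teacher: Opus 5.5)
Your proof is correct and follows the same route as the paper's: restrict to $\mu>kM$, show $G$ strictly decreases via the Leibniz rule, use the bound $G(\mu)\le(\phi_2-\phi_1)/(\mu-kM)$ at infinity, show divergence as $\mu\to kM^{+}$, and conclude with the intermediate value theorem and monotonicity; your monotone-convergence argument with $M-\sin\phi\le|\phi-\bar\phi|$ actually proves the divergence the paper only asserts, and your caveat about the side condition $|\mu|>1$ in Eq.~\eqref{mu_dwell} identifies a point the paper's proof silently ignores. The one step the paper includes that you bypass by defining feasibility as $\mu>kM$ is the exclusion of $\mu<km$, with $m:=\min_{\phi\in[\phi_1,\phi_2]}\sin\phi$, where the integrand is negative so $G(\mu)<0<\lambda_\phi T_{dw}$; add that line if ``feasible'' is read as ``the integral is proper'', and note that $[\phi_1,\phi_2]\times[\mu_0,\infty)$ is not compact, so apply the Leibniz rule on bounded intervals $[\mu_0,\mu_1]$ with $kM<\mu_0<\mu_1$.
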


\begin{proof}
 We examine the continuity and monotonicity of the associated integral function in Eq.~\eqref{mu_dwell} over its feasible domain and analyze its limiting behavior at the domain boundaries.
Let $I:=[\phi_1,\phi_2]$, $\Delta:=\phi_2-\phi_1>0$,
$c:=\lambda_{\phi}T_{dw}>0$, and
\[
m:=\min_{\phi\in I}\sin\phi,\qquad
M:=\max_{\phi\in I}\sin\phi
\]
Consider
\begin{equation}
G(\mu):=\int_{\phi_1}^{\phi_2}
\frac{\mathrm{d}\phi}{\mu-k\sin\phi}
\label{eq:def_G}
\end{equation}
Using Eqs.~\eqref{mu_dwell} and \eqref{eq:def_G}, $f(\mu)=0$ is equivalent to $G(\mu)=c$. The integral is well-defined as an ordinary proper integral on
$\mathcal D=(-\infty,km)\cup(kM,\infty)$. For $\mu\in[km,kM]$, there exists $\phi_0\in I$ with $\mu=k\sin\phi_0$, so the integrand is singular and $\mu$ is infeasible. For $\mu<km$, we have $\mu-k\sin\phi\leq \mu-km<0, \quad \forall\,\phi\in I,$ which implies $G(\mu)<0<c$. Therefore, no solution exists in $(-\infty,km)$.

\textit{Continuity and strict monotonicity:}
Let $U:=(kM,\infty)$. For every $\mu\in U$ and $\phi\in I$, $\mu-k\sin\phi\geq\mu-kM>0.$ Thus, the integrand in Eq.~\eqref{eq:def_G} is continuous on $I$, and $G$ is continuous on $U$. Differentiating using Leibniz integral rule yields
\[
G'(\mu)=
-\int_{\phi_1}^{\phi_2}
\frac{\mathrm{d}\phi}{(\mu-k\sin\phi)^2}<0,
\qquad \mu\in U
\]
Hence, $G$ is strictly decreasing on $U$.

\textit{Boundary limits:}
For $\mu\in U$, the denominator lower bound gives $G(\mu)\leq \frac{\Delta}{\mu-kM}.$ Consequently, $\lim_{\mu\to+\infty}G(\mu)=0.$ Furthermore, as $\mu \to kM^+$, the integrand exhibits a non-integrable singularity near the points where $\sin\phi = M$, causing the integral to diverge. Thus, $\lim_{\mu\to kM^+} G(\mu) \to +\infty$.

\textit{Existence:}
Since $c>0$, the limit $\lim_{\mu\to kM^+} G(\mu) \to +\infty$ guarantees we can choose a value $p \in U$ sufficiently close to $kM$ such that $G(p)>c$. Similarly, $\lim_{\mu\to+\infty} G(\mu)=0$ ensures we can choose a sufficiently large $q \in U$, with $q>p$, such that $G(q)<c$. By continuity of $G$ on $[p,q]$, the intermediate value theorem guarantees the existence of
$\mu^\ast\in(p,q)$ satisfying $G(\mu^\ast)=c.$ Thus, Eq.~\eqref{mu_dwell} admits a solution
$\mu^\ast\in U$.

\textit{Uniqueness:} Suppose that two distinct solutions
$\mu_1,\mu_2\in U$ exist, with $\mu_1<\mu_2$. Since $G$ is strictly decreasing,
$G(\mu_1)>G(\mu_2),$ contradicting $G(\mu_1)=G(\mu_2)=c$. Hence, the
solution in $U$ is unique. Since no solution exists in $(-\infty,km)$ and the interval $[km,kM]$ is infeasible, $\mu^\ast$ is the unique feasible solution.
In particular, $\mu^\ast>kM$, completing the proof. Note that a similar proof can be developed for the $k<0$ case.
\end{proof}

\subsection{Constant Speed Elliptical Circumnavigation } \label{cn_const_speed}
There may also be a requirement for the quadrotor to traverse the boundary at a constant speed for persistent monitoring. For this purpose, we revisit the proposed dynamics in Eqs.~\eqref{desired_beta_dot} and \eqref{desired_phi_dot}. Therein, to ensure convergence to the elliptical boundary within a desired settling time, say $T_{cn}$, from an arbitrary initial position outside the ellipse, the control radial parameter $\lambda_{\beta}$ can be computed by setting $t_{s_{\beta}} = T_{cn}$ and $\rho = e^{-\beta_d}$ in Eq.~\eqref{lambda_beta_par}. Further, with $k=0$ and an appropriate choice of $\lambda_{\phi}$ in Eq.~\eqref{desired_phi_dot}, constant-speed circumnavigation at a desired speed $v_d$ can be achieved, as established in Proposition~\ref{prop:constant_speed_cn}.

\begin{proposition}
\label{prop:constant_speed_cn}
Consider the angular dynamics defined in Eq.~\eqref{desired_phi_dot} with $k = 0$, $\mu = 1$, and $\beta = 1$. The angular control coefficient $\lambda_\phi$ that ensures circumnavigation along the elliptical boundary with a constant speed $v_d$ is given by
\begin{equation}
\lambda_\phi = \frac{v_d}{\sqrt{(a\sin{\phi})^2 + (b\cos{\phi})^2}} 
\label{eq:lambda_phi_constant_speed}
\end{equation}
\end{proposition}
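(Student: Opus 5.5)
The plan is to compute the Cartesian speed of the quadrotor directly from the transformation in Eq.~\eqref{cov}, then impose the constant-speed requirement. Differentiating $x = a\beta\cos\phi$ and $y = b\beta\sin\phi$ with respect to time gives
\begin{equation*}
\dot{x} = a\dot{\beta}\cos\phi - a\beta\dot{\phi}\sin\phi, \qquad
\dot{y} = b\dot{\beta}\sin\phi + b\beta\dot{\phi}\cos\phi .
\end{equation*}
On the desired boundary we have $\beta = 1$. Since $\rho = e^{-\beta_d}$ from Eq.~\eqref{p_parameter} with $\beta_d = 1$, Eq.~\eqref{desired_beta_dot} also gives $\dot{\beta} = \lambda_\beta(e^{-1}-e^{-1}) = 0$. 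Because $\beta^*$ is an equilibrium, the state then remains on the ellipse. The radial terms therefore vanish, leaving
\begin{equation*}
\dot{x} = -a\dot{\phi}\sin\phi, \qquad \dot{y} = b\dot{\phi}\cos\phi .
\end{equation*}

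Next I will substitute the angular dynamics. With $k = 0$ and $\mu = 1$, Eq.~\eqref{desired_phi_dot} reduces to $\dot{\phi} = \lambda_\phi > 0$. The speed $v = \sqrt{\dot{x}^2 + \dot{y}^2}$ from Eq.~\eqref{x_dot_y_dot_po} then becomes
\begin{equation*}
v = \lambda_\phi\sqrt{(a\sin\phi)^2 + (b\cos\phi)^2}.
\end{equation*}
Setting $v = v_d$ and solving for $\lambda_\phi$ gives Eq.~\eqref{eq:lambda_phi_constant_speed}. The denominator never vanishes, because $(a\sin\phi)^2 + (b\cos\phi)^2 \ge \min(a^2,b^2) > 0$. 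Since $\dot{\phi} > 0$ for all time, $\phi$ increases monotonically and the quadrotor circumnavigates the ellipse counter-clockwise. This confirms the circumnavigation claim, not just the speed.

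The main subtlety is interpretive rather than computational. As given in Eq.~\eqref{eq:lambda_phi_constant_speed}, $\lambda_\phi$ depends on $\phi$, so it is a state-dependent gain rather than the positive constant assumed in Sec.~\ref{sec:proposed method}. I will argue that the substitution is still valid, because the speed computation is pointwise in time. Replacing $\lambda_\phi$ by $\lambda_\phi(\phi)$ in $\dot{\phi} = \lambda_\phi(\mu - k\sin\phi)$ gives a smooth, strictly positive vector field on the circle. Its solutions therefore exist for all time, $\phi$ increases without bound, and the pointwise speed identity holds at every instant, so $v \equiv v_d$.
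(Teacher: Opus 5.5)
Your proposal is correct and follows essentially the same route as the paper: restrict to $\beta = 1$, differentiate $x = a\cos\phi$, $y = b\sin\phi$ to get $v = \dot{\phi}\sqrt{(a\sin\phi)^2 + (b\cos\phi)^2}$, use $\dot{\phi} = \lambda_\phi \mu$ with $k = 0$, $\mu = 1$, and solve for $\lambda_\phi$ at $v = v_d$. Your extra steps make explicit what the paper leaves implicit: $\dot{\beta} = 0$ on the boundary because $\rho = e^{-1}$, the denominator is bounded below by $\min(a,b)$, and the state-dependent gain $\lambda_\phi(\phi)$ still yields a positive vector field, so counter-clockwise circumnavigation persists for all time.
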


\begin{proof}
For $\beta = 1$, Eq.~\eqref{cov} reduces to $x = a \cos \phi,~y = b \sin \phi.$ Differentiating these relations with respect to time yields $
\dot{x} = -a \sin \phi \, \dot{\phi},~
\dot{y} = b \cos \phi \, \dot{\phi}.$  Accordingly, the speed of the quadrotor along the elliptical boundary can be expressed as
\begin{equation}
v = \sqrt{\dot{x}^2 + \dot{y}^2}
       = \dot{\phi} \sqrt{(a\sin{\phi})^2 + (b\cos{\phi})^2}
\label{eq:elliptic_speed}
\end{equation}
Further, Eq.~\eqref{eq:elliptic_speed} can be written as
\begin{equation}
    \dot{\phi} = \frac{v}{\sqrt{(a\sin{\phi})^2 + (b\cos{\phi})^2}}
    \label{eq:phi_intermediate}
\end{equation}
From \eqref{desired_phi_dot}, with $k = 0$, it follows that
\begin{equation}
    \dot{\phi} = \lambda_\phi \mu
    \label{eq:phi_dot_with_k_0}
\end{equation}
Enforcing the constant-speed condition $(v = v_d)$ in Eq.~\eqref{eq:phi_intermediate} and equating the resulting expression with Eq.~\eqref{eq:phi_dot_with_k_0} for  $(\mu = 1)$, corresponding to counter-clockwise motion, yields Eq.~\eqref{eq:lambda_phi_constant_speed}, thereby completing the proof.
\end{proof}

\begin{figure}
    \centering
    \includegraphics[width=0.42\linewidth]{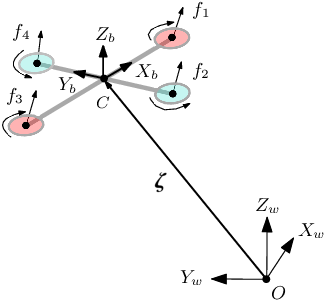}
    \caption{Quadrotor schematic with the world frame, the body-fixed frame, and rotor thrust forces.}
    \label{fig:quadrotor_frames}
\end{figure}

\section{Quadrotor Dynamics and Control}\label{sec: quadrotor model and control}
This section describes the quadrotor dynamic model and the cascaded sliding mode control framework employed to track the reference trajectories generated by the proposed guidance method.

\subsection{Quadrotor Equations of Motion} \label{6dof_model}
Consider a quadrotor of mass $m$ in the ``+'' configuration, illustrated in Fig.~\ref{fig:quadrotor_frames}. The world frame $(X_w,Y_w,Z_w)$ is defined according to the North-East-Up convention, whereas $(X_b,Y_b,Z_b)$ represents the body-fixed coordinate frame. The position of the quadrotor's center of mass is denoted by $C(x,y,z)$, and $g$ represents the gravitational acceleration. The vehicle's orientation is described by the Euler angles $(\Phi,\Theta,\psi)$, corresponding to roll, pitch, and yaw, respectively. The moments of inertia about the body-fixed axes are given by $(I_{xx},I_{yy},I_{zz})$. Using the Newton–Euler formulation, the quadrotor equations of motion are expressed as \cite{xing2023active}
\begin{align}
\ddot{x}
&= \frac{u_1}{m}\left( \cos\Phi \sin\Theta \cos\psi + \sin\Phi \sin\psi \right)
+ \frac{F_{D_x}}{m} + d_x \label{x_ddot} \\[2pt]
\ddot{y}
&= \frac{u_1}{m}\left( \cos\Phi \sin\Theta \sin\psi
- \sin\Phi \cos\psi \right)
+ \frac{F_{D_y}}{m} + d_y \label{y_ddot} \\
\ddot{z}
&= -g
+ \frac{u_1}{m}\left( \cos\Phi \cos\Theta \right)
+ \frac{F_{D_z}}{m} + d_z \label{z_ddot}  \\[2pt]
\ddot{\Phi}
&= \dot{\Theta}\dot{\psi}
\left( \frac{I_{yy}-I_{zz}}{I_{xx}} \right)
+ \frac{u_2}{I_{xx}}
+ d_{\Phi} \label{phi_ddot}  \\
\ddot{\Theta}
&= \dot{\Phi}\dot{\psi}
\left( \frac{I_{zz}-I_{xx}}{I_{yy}} \right)
+ \frac{u_3}{I_{yy}}
+ d_{\Theta} \label{theta_ddot} \\
\ddot{\psi}
&= \dot{\Phi}\dot{\Theta}
\left( \frac{I_{xx}-I_{yy}}{I_{zz}} \right)
+ \frac{u_4}{I_{zz}}
+ d_{\psi} \label{psi_ddot} 
\end{align}
The total thrust generated by the four rotors along the body-fixed $Z_b$-axis is denoted by $u_1 = \sum_{i=1}^{4} f_i$. The control inputs $u_2 = \tau_{\Phi}$, $u_3 = \tau_{\Theta}$, and $u_4 = \tau_{\psi}$ correspond to the roll, pitch, and yaw torques, respectively. The aerodynamic drag-force $\mathbf{F}_{D}$ is characterized through its components represented by $F_{D_x}$, $F_{D_y}$, and $F_{D_z}$. Furthermore, \(d_x\), \(d_y\), \(d_z\), \(d_{\Phi}\), \(d_{\Theta}\), and \(d_{\psi}\) represent bounded lumped uncertainty terms that account for unmodeled dynamics and other uncertainties, including effects associated with rotor inertia and gyroscopic coupling.

\subsubsection{Aerodynamic Drag Formulation}
The aerodynamic drag force appearing in the translational dynamics (Eqs.~\eqref{x_ddot}--\eqref{z_ddot}) is modeled based on the vehicle's velocity relative to the surrounding air. Let $\mathbf{v} = [\dot{x},\,\dot{y},\,\dot{z}]^\top$ represent the quadrotor's ground-relative velocity, and $\mathbf{v}_w = [v_{wx},\,v_{wy},\,v_{wz}]^\top$ denote the wind velocity relative to the ground, with both vectors expressed in the world frame. Accordingly, the relative air velocity is defined as
\begin{equation}
\mathbf{v}_{\mathrm{rel}} = \mathbf{v} - \mathbf{v}_w
\label{rel_wind_velocity}
\end{equation}
As the drag coefficients and effective reference areas are defined along the body axes, the relative air velocity is transformed into the body frame using $\mathbf{v}_{\mathrm{rel}}^b = R^\top\mathbf{v}_{\mathrm{rel}}$. The rotation matrix $R$, which transforms vectors from the body frame to the world frame, is defined according to the $ZYX$ Euler-angle sequence as
\begin{align}
R =
\begin{bmatrix}
c\Theta c\psi &
s\Phi s\Theta c\psi - c\Phi s\psi &
c\Phi s\Theta c\psi + s\Phi s\psi \\
c\Theta s\psi &
s\Phi s\Theta s\psi + c\Phi c\psi &
c\Phi s\Theta s\psi - s\Phi c\psi \\
-s\Theta &
s\Phi c\Theta &
c\Phi c\Theta
\end{bmatrix}
\end{align}
where $c(\cdot)$ and $s(\cdot)$ denote the cosine and sine functions, respectively. Following Ref.~\cite{raza2016autonomous}, the quadratic aerodynamic drag force expressed in the body frame is given by
\begin{equation}
\mathbf{F}_{D}^b
= -\frac{1}{2}\rho_a
\left\|\mathbf{v}_{\mathrm{rel}}^b\right\|
\begin{bmatrix}
C_{D_x}A_xv_{\mathrm{rel},x}^b, ~
C_{D_y}A_yv_{\mathrm{rel},y}^b, ~
C_{D_z}A_zv_{\mathrm{rel},z}^b
\end{bmatrix}^\top
\end{equation}
Here, $\rho_a$ denotes the ambient air density, $C_{D_x}$, $C_{D_y}$, and $C_{D_z}$ represent the drag coefficients along the body axes, and $A_x$, $A_y$, and $A_z$ denote the corresponding effective reference areas. Finally, the drag force is transformed back into the world frame as $\mathbf{F}_{D} = R\mathbf{F}_{D}^b$.

\subsubsection{Dryden Wind Turbulence Model}
The wind velocity appearing in Eq.~\eqref{rel_wind_velocity} is modelled in the world frame using the Dryden turbulence formulation as
\begin{equation}
    \mathbf{v}_w = \mathbf{v}_{\mathrm{mean}} + R \boldsymbol{\eta}_b(t) = [v_{\mathrm{mean},x}, v_{\mathrm{mean},y}, v_{\mathrm{mean},z}]^\top + R [ \eta_u(t), \eta_v(t), \eta_w(t)]^\top
    \label{wind_velocity}
\end{equation}
Here, $\mathbf{v}_{\mathrm{mean}}$ represents the mean wind velocity in the world frame, while $\boldsymbol{\eta}_b(t) = [\eta_u(t),~\eta_v(t),~\eta_w(t)]^\top$ contains the turbulent gust velocities defined along the longitudinal, lateral, and vertical axes of the vehicle body frame, respectively. The rotation matrix $R$ maps these body-frame turbulence components into the world frame. The gust components are obtained by filtering mutually independent, zero-mean Gaussian white-noise processes through linear shaping filters, as described by $ \eta_u(s) = G_u(s)n_u(s),~\eta_v(s) = G_v(s)n_v(s),~  \eta_w(s) = G_w(s)n_w(s)$. Here $n_u(t)$, $n_v(t)$, and $n_w(t)$ denote independent Gaussian white-noise inputs with zero mean, whereas $G_u(s)$, $G_v(s)$, and $G_w(s)$ are the corresponding transfer functions designed to reproduce the specified Dryden turbulence power spectral densities. 

For low-altitude operation, defined by $h \leq 1000~\mathrm{ft}$, the turbulence scale lengths and standard deviations are determined in accordance with the MIL-F-8785C specification. Since the altitude $(z)$ is expressed in metres, it is converted to feet using $h_{\mathrm{ft}} = \frac{z}{0.3048}.$ Based on this altitude, the turbulence scale lengths in the longitudinal, lateral, and vertical directions are calculated as \cite{stroe2016analysis}
\begin{equation}
L_w = h_{\mathrm{ft}}, \quad L_u = L_v =
\frac{h_{\mathrm{ft}}}
{\left(0.177+0.000823h_{\mathrm{ft}}\right)^{1.2}}
\label{scale_lengths}
\end{equation}
The calculated scale lengths are subsequently converted from feet to metres before being incorporated into the simulation. The associated turbulence intensity parameters are specified by the standard deviations
\begin{equation}
\sigma_w = 0.1W_{20}, \quad \sigma_u = \sigma_v =
\frac{\sigma_w}
{\left(0.177+0.000823h_{\mathrm{ft}}\right)^{0.4}}
\label{turbulance_std_dev}
\end{equation}
where $W_{20}$ represents the reference wind speed measured at an altitude of $20~\mathrm{ft}$. With the scale lengths (Eq.~\eqref{scale_lengths}) and turbulence standard deviations (Eq.~\eqref{turbulance_std_dev}) determined, the transfer functions of the Dryden shaping filters for the three gust components are given by
\begin{equation}
G_u(s) =
\sigma_u
\sqrt{\frac{2L_u}{\pi U_0}}
\frac{1}
{1+\left(\frac{L_u}{U_0}s\right)}, \quad G_v(s) =
\sigma_v
\sqrt{\frac{L_v}{\pi U_0}}
\frac{1+\sqrt{3}\left(\frac{L_v}{U_0}s\right)}
{\left[1+\left(\frac{L_v}{U_0}s\right)\right]^2}, \quad G_w(s) =
\sigma_w
\sqrt{\frac{L_w}{\pi U_0}}
\frac{1+\sqrt{3}\left(\frac{L_w}{U_0}s\right)}
{\left[1+\left(\frac{L_w}{U_0}s\right)\right]^2}
\end{equation}
where $U_0$ denotes the reference forward airspeed of the quadrotor.

\subsection{Cascaded Sliding-Mode Control Framework for Trajectory Tracking}

\begin{figure}
    \centering
    \includegraphics[scale = 0.74]{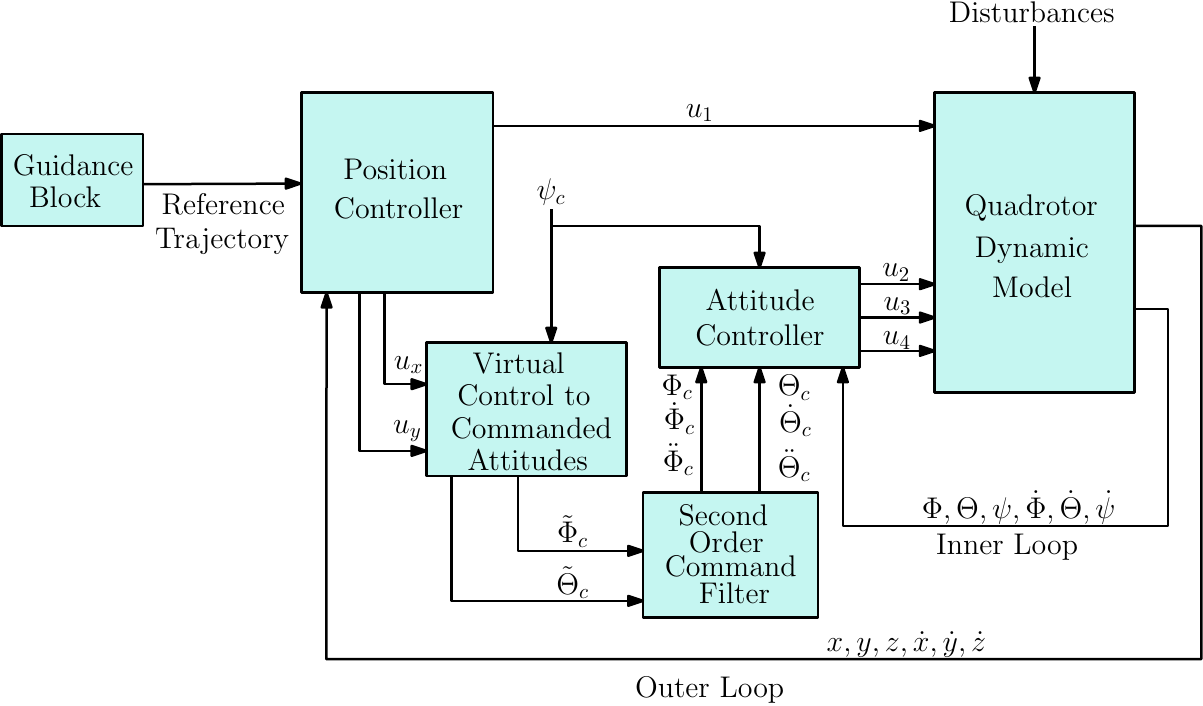}
    \caption{Block diagram of cascaded sliding mode control framework for quadrotor trajectory tracking.}
    \label{smc_block_diag}
\end{figure}

This subsection presents a cascaded sliding-mode control strategy for quadrotor trajectory tracking and stabilization, as depicted in Fig.~\ref{smc_block_diag}. The guidance block constructs the reference trajectory at a fixed altitude according to the requirements of the inspection task. The reference horizontal velocity components are obtained by differentiating Eq.~\eqref{cov} and using Eqs.~\eqref{desired_beta_dot} and \eqref{desired_phi_dot} therein, resulting in
\begin{align}
 \dot{x}_{\textrm{ref}}& = a \left( \lambda_{\beta}(e^{-\beta} - \rho)\cos{\phi}
 - \beta\lambda_{\phi}(\mu - k\sin{\phi})\sin{\phi} \right)
 \label{x_des_dot}\\
 \dot{y}_{\textrm{ref}}& = b \left( \lambda_{\beta}(e^{-\beta} - \rho)\sin{\phi}
 + \beta\lambda_{\phi}(\mu - k\sin{\phi})\cos{\phi} \right)
 \label{y_des_dot}
\end{align}
The reference position components $(x_{\textrm{ref}}, y_{\textrm{ref}})$ are obtained through integration of the corresponding velocity equations, Eqs.~\eqref{x_des_dot} and \eqref{y_des_dot}. Similarly, the desired acceleration components are calculated by taking their  time derivatives.

Adopting the approach presented in Ref.~\cite{liu2024smc}, the control structure incorporates an outer-loop position controller and an inner-loop attitude controller as shown in Fig.~\ref{smc_block_diag}. Aerodynamic drag and lumped model uncertainties are treated as bounded disturbances in the simulation. The effects of these disturbances are mitigated by the sliding-mode control law.  Let the translational position and attitude vectors be defined as $\boldsymbol{\zeta} = [x,y,z]^\top$ and $\boldsymbol{\eta} = [\Phi,\Theta,\psi]^\top$, respectively. The associated tracking errors are expressed as
$\mathbf{e}_p = \boldsymbol{\zeta}_{\mathrm{ref}} - \boldsymbol{\zeta}$ and
$\mathbf{e}_\eta = \boldsymbol{\eta}_{\mathrm{c}} - \boldsymbol{\eta}$.
To prevent Euler-angle singularities, both the actual and commanded roll and pitch angles are constrained to the interval $\left(-\pi/2,\,\pi/2\right)$. The sliding surfaces for the position and attitude variables are defined as
\begin{equation}
\mathbf{s}_p = \dot{\mathbf{e}}_p + \boldsymbol{\Lambda}_p \mathbf{e}_p, \qquad 
\mathbf{s}_\eta = \dot{\mathbf{e}}_\eta + \boldsymbol{\Lambda}_\eta \mathbf{e}_\eta
\label{sliding_surfaces}
\end{equation}
where $\boldsymbol{\Lambda}_p = \mathrm{diag}(\lambda_x, \lambda_y, \lambda_z)$ and $\boldsymbol{\Lambda}_\eta = \mathrm{diag}(\lambda_\Phi, \lambda_\Theta, \lambda_\psi)$ are positive-definite diagonal gain matrices associated with the position and attitude sliding surfaces, respectively. The evolution of each scalar sliding variable $s_i$ is governed by the continuous reaching law \cite{liu2024smc}
\begin{equation}
\dot{s}_i = -k_{1i}s_i - k_{2i}\tanh\left(\frac{s_i}{\epsilon}\right), \quad \textrm{where} \quad i \in \{x,y,z,\Phi,\Theta,\psi\}
\label{reaching_law}
\end{equation}
Here, $k_{1i}$ and $k_{2i}$ are positive reaching-law gains, while $\epsilon$ is a positive smoothing parameter that regulates the transition of the hyperbolic tangent function near the sliding surface. Position controller in the outer loop provides the total thrust command, which is derived by differentiating the vertical sliding variable $s_z$, and using the vertical translational dynamics (Eq.~\eqref{z_ddot}) together with the reaching law in Eq.~\eqref{reaching_law} as
\begin{equation}
u_1 = \frac{m}{\cos\Theta\cos\Phi} \left[ \ddot{z}_{\mathrm{ref}} + g + \lambda_z\dot{e}_z + k_{1z}s_z + k_{2z}\tanh\left(\frac{s_z}{\epsilon}\right) \right]
\label{thrust_control}
\end{equation}
To track the reference trajectory in the horizontal plane, virtual control inputs are defined based on the translational dynamics in Eqs.~\eqref{x_ddot}--\eqref{y_ddot} as follows
\begin{align}
u_x &= \frac{m}{u_1}\left[\ddot{x}_{\mathrm{ref}} + \lambda_x\dot{e}_x + k_{1x}s_x + k_{2x}\tanh\left(\frac{s_x}{\epsilon}\right)\right] \label{ux_control} \\
u_y &= \frac{m}{u_1}\left[\ddot{y}_{\mathrm{ref}} + \lambda_y\dot{e}_y + k_{1y}s_y + k_{2y}\tanh\left(\frac{s_y}{\epsilon}\right)\right]
\label{uy_control}
\end{align}
With the commanded yaw angle $\psi_{\mathrm{c}}$ specified, these virtual control signals are transformed into intermediate roll and pitch commands as
\begin{equation}
\tilde{\Phi}_{\mathrm{c}} = \sin^{-1}\left( u_x\sin\psi_{\mathrm{c}} - u_y\cos\psi_{\mathrm{c}} \right), \qquad
\tilde{\Theta}_{\mathrm{c}} = \sin^{-1}\left( \frac{u_x\cos\psi_{\mathrm{c}} + u_y\sin\psi_{\mathrm{c}}}{\cos\tilde{\Phi}_{\mathrm{c}}} \right)
\label{angles_command}
\end{equation}

To obtain the derivatives of the commanded roll and pitch angles required by the attitude controller, direct differentiation of \(\tilde{\Phi}_{\mathrm{c}}\) and \(\tilde{\Theta}_{\mathrm{c}}\) in Eq.~\eqref{angles_command} would require differentiating the virtual control expressions in Eqs.~\eqref{ux_control} and \eqref{uy_control}, resulting in cumbersome analytical expressions. To avoid this complexity, the intermediate roll and pitch commands are passed through second-order command filters, which generate the corresponding attitude references and their time derivatives \cite{liu2023command} as follows
\begin{equation}
\ddot{\alpha}_{\mathrm{c}} + 2\xi\omega_n\dot{\alpha}_{\mathrm{c}} + \omega_n^2\alpha_{\mathrm{c}} = \omega_n^2\tilde{\alpha}_{\mathrm{c}}, \qquad \alpha \in \{\Phi,\Theta\}
\label{command_filter}
\end{equation}
In Eq.~\eqref{command_filter}, $\xi$ and $\omega_n$ represent the filter damping ratio and natural frequency, respectively. The resulting filtered signals, $\alpha_{\mathrm{c}}$, $\dot{\alpha}_{\mathrm{c}}$, and $\ddot{\alpha}_{\mathrm{c}}$, provide smooth attitude commands together with their first and second-order time derivatives. The control torques are obtained by applying the reaching law in Eq.~\eqref{reaching_law} to the attitude sliding surfaces and using the rotational dynamics in Eqs.~\eqref{phi_ddot}--\eqref{psi_ddot}, yielding
\begin{align}
u_2 &= I_{xx}\left[\ddot{\Phi}_{\mathrm{c}} + \lambda_{\Phi}\dot{e}_{\Phi} + k_{1\Phi}s_{\Phi} + k_{2\Phi}\tanh\left(\frac{s_{\Phi}}{\epsilon}\right)\right] - (I_{yy}-I_{zz})\dot{\Theta}\dot{\psi} \label{roll_control} \\
u_3 &= I_{yy}\left[\ddot{\Theta}_{\mathrm{c}} + \lambda_{\Theta}\dot{e}_{\Theta} + k_{1\Theta}s_{\Theta} + k_{2\Theta}\tanh\left(\frac{s_{\Theta}}{\epsilon}\right)\right] - (I_{zz}-I_{xx})\dot{\Phi}\dot{\psi} \label{pitch_control} \\
u_4 &= I_{zz}\left[\ddot{\psi}_{\mathrm{c}} + \lambda_{\psi}\dot{e}_{\psi} + k_{1\psi}s_{\psi} + k_{2\psi}\tanh\left(\frac{s_{\psi}}{\epsilon}\right)\right] - (I_{xx}-I_{yy})\dot{\Phi}\dot{\Theta} \label{yaw_control}
\end{align}

\section{Qualitative Comparison With Prior Works}\label{sec: qualitative comparison}
This section presents a qualitative comparison between the proposed method and existing elliptical circumnavigation guidance strategies, emphasizing differences in methodology, control inputs and parameters involved, and achievable trajectory modes. A summary of this comparison is provided in Table~\ref{comp_table}. As evident from the table, the proposed method yields analytical expressions for the quadrotor velocity components that are algebraically simpler, thereby facilitating straightforward practical implementation. Notably, the guidance strategies in Refs.~\cite{lawrence2008lyapunov} and \cite{shivam2022vector} are specifically designed for fixed-wing UAVs, making them unsuitable for hovering at or switching between viewpoints. Furthermore, with regard to sectorial dwell time constraints along the boundary, control over the quadrotor speed is required. Although the methods in Refs.~\cite{shao2023distance} and \cite{cheng2010application} exhibit variations in speed during elliptical circumnavigation, no explicit control over the speed is provided. Consequently, the proposed framework is the only approach that enables the vehicle to dwell for a prescribed duration within a specified angular sector of the elliptical boundary. Moreover, unlike existing approaches that are typically limited to a single trajectory mode, the proposed framework has the flexibility to achieve four distinct trajectory modes by choosing a single bifurcation parameter $ (\mu)$ and a constant $(k)$ within a unified formulation.

\setlength{\tabcolsep}{8pt} 
\renewcommand{\arraystretch}{1.2} 

\begin{table}
\caption{Qualitative comparison with prior methods}
\label{comp_table}
\centering
\begin{tabular}{p{1.1cm} p{3.9cm} p{3.5cm}@{\hspace{1.2cm}} p{2.6cm} p{1.3cm} p{1.5cm}}
\hline
\hline
\multirow{3}{*}{Reference} & 
\multirow{3}{*}{Guidance methodology} & 
\multirow{3}{*}{Control inputs and parameters} & 
\multicolumn{3}{c}{Trajectory modes} \\
\cline{4-6}
& & & Reaching a viewpoint \& switching & \multicolumn{2}{c}{Circumnavigation} \\
\cline{5-6}
& & &  between viewpoints & Constant speed & Sectoral dwell time feature \\
\hline
\cite{lawrence2008lyapunov} & UAV velocity components are determined in analytical form by transforming a circular Lyapunov vector field through an inverse mapping function, resulting in an elliptical attractor.  & The desired course angle is deduced from the velocity components each containing four control parameters, and is tracked using a controller with a turning rate gain.  & No & Yes & No \\

\cite{shivam2022vector}  & UAV course angle vector field is expressed as a  combination of tangent to the ellipse and a non-linear arcsine function of position error  with respect to the ellipse.  & The commanded course angle incorporates a vector field smoothing gain, and is tracked using a first-order controller with a tracking gain.   &       No      &    Yes   &  No   \\

 \cite{gandolfo2016stable}  & Velocity components are obtained from a hyperbolic tangent function of the position and orientation errors of the quadrotor relative to the closest point on an ellipse, combined with a desired path-tangent velocity vector.   & Velocity components are expressed in closed-form, involving eight control parameters where four parameters limit the maximum control effort and the other four control the tracking error.  &       No      &    Yes   &  No   \\ 

 \cite{shao2023distance}   & Vehicle heading direction is generated by combining the radial correction toward the desired ellipse with the tangential direction along the ellipse, and the resulting heading is tracked using an angular-rate controller.   & Velocity and heading commands are in closed-form, incorporating a tangential speed scaling gain and a heading tracking gain. &  No       &     No      &  No   \\

 \cite{cheng2010application}   & Robot velocity components are generated by converting a circular limit cycle into an elliptical one through anisotropic scaling. 
  & Closed-form velocity inputs involve three control parameters determining orbital direction, convergence magnitude, and ellipse shape, respectively.
  &       No       &     No    & No    \\ 

  Proposed method   & Quadrotor velocity is obtained using saddle-node infinite period bifurcation in quadrotor position variables $\beta$ and $\phi$.  & Closed-form velocity inputs are given by Eqs.~\eqref{x_des_dot} and \eqref{y_des_dot} and depend on a bifurcation parameter ($\mu$), a constant ($k$), a radial control parameter ($\lambda_{\beta}$), and an angular control coefficient ($\lambda_{\phi}$) to regulate the settling time.           &      Yes            &  Yes    &    Yes \\
\hline \hline
\end{tabular}
\end{table}

\section{Simulation Studies}\label{sec:simulation results}
In this section, the proposed method is validated through numerical simulations of the four boundary-task scenarios, wherein the reference trajectories generated by the proposed method are tracked using a 6-DOF quadrotor model. The quadrotor's physical and aerodynamic parameters are taken from Ref.~\cite{raza2016autonomous}. The vehicle mass and gravitational acceleration are $m = 2.2~\mathrm{kg}$ and $g = 9.81~\mathrm{m/s^2}$, respectively, while the principal moments of inertia are $[I_{xx}, I_{yy}, I_{zz}] = [0.0150, 0.0138, 0.0205]~\mathrm{kg\cdot m^2}$. The air density is $\rho_a = 1.225~\mathrm{kg/m^3}$, and the drag coefficients are assumed to be identical along all three axes, with $C_{D_x} = C_{D_y} = C_{D_z} = 1.05$. The corresponding reference areas are $A_x = A_y = 0.0136~\mathrm{m^2}$ and $A_z = 0.0256~\mathrm{m^2}$. The commanded yaw angle is $\psi_c = 0$. To account for model uncertainties, the disturbance terms are specified as $d_x = d_y = d_z = d_{\Phi} = d_{\theta} = 0.1\sin(t)$ and $d_{\psi} = 0.01\sin(t)$. Wind disturbances are modeled using the Dryden wind model, with a reference wind speed of $W_{20} = 7.7167~\mathrm{m/s}$ (15 knots), a forward speed of $U_0 = 5~\mathrm{m/s}$, and mean wind components of $v_{\mathrm{mean},x} = v_{\mathrm{mean},y} = v_{\mathrm{mean},z} = 0.5~\mathrm{m/s}$.

For the sliding-mode controller, the sliding-surface gains are chosen as $\lambda_x = \lambda_y = \lambda_z = 2$, $\lambda_{\Phi} = \lambda_{\theta} = 10$, and $\lambda_{\psi} = 5$. The linear reaching gains are set to $k_{1x} = k_{1y} = k_{1z} = 3$, $k_{1\Phi} = k_{1\theta} = 15$, and $k_{1\psi} = 8$. The nonlinear reaching gains are selected as $k_{2x} = k_{2y} = 1$, $k_{2z} = 2$, $k_{2\Phi} = k_{2\theta} = 4$, and $k_{2\psi} = 3$. The smoothing parameter is set to $\epsilon = 0.4$. Finally, the command filter parameters are chosen as $\xi = 1$ and $\omega_n = 25$. All simulations are carried out at a fixed altitude of $z_{\mathrm{ref}} = 5~\mathrm{m}$. The centre of the elliptical boundary is at the origin, with semi-major and semi-minor axes $(a,b) = (6~\mathrm{m}, 4~\mathrm{m})$. The neighborhood bounds around the desired values of $\beta$ and $\phi$ are chosen as $\epsilon_\beta = 0.001$ and $\epsilon_\phi = 0.005~\mathrm{deg}$, respectively. The elliptic radial parameter $\rho = 0.3679$ is computed using $\beta_d = 1$ in Eq.~\eqref{p_parameter}.

\subsection{Case 1: Rendezvous with a Viewpoint}
In this case, with the quadrotor initial position at $(x_0, y_0) = (9~\mathrm{m}, 8~\mathrm{m})$, the objective is to reach a desired viewpoint on the boundary, specified by the line-of-sight angle $\theta_d = 160~\mathrm{deg}$, within a settling time $T_{vp} = 20~\mathrm{s}$.  The corresponding desired elliptic angle $\phi_d = 151.3674~\mathrm{deg}$ is computed using Eq.~\eqref{main_phi_d}. The associated guidance parameters $\lambda_{\beta} = 1.1083$, $\lambda_{\phi} = 0.6369$, $k = -1$, and $\mu = -0.4792$ are determined based on the analysis presented in Sec.~\ref{reach_vp}. The resulting quadrotor trajectory converges to the desired viewpoint on the elliptic boundary as shown in Fig.~\ref{case1_traj}. The variation in the elliptic radial distance and the elliptic angle with respect to the line-of-sight angle are shown in Figs.~\ref{case1}(a) and \ref{case1}(b), respectively. The variables attain their desired values of $\beta_d = 1~\mathrm{m}$ and $\theta_d = 160~\mathrm{deg}$ with negligible error, and within the prescribed settling time of $T_{vp} = 20~\mathrm{s}$. The altitude response, shown in Fig.~\ref{case1}(c), closely tracks the reference altitude $z_{\mathrm{ref}} = 5~\mathrm{m}$. The commanded and attained pitch and roll responses are presented in Figs.~\ref{case1}(d) and \ref{case1}(e), respectively. The corresponding velocity profile in Fig.~\ref{case1}(f) illustrates the variation in quadrotor speed as it approaches the desired viewpoint.

Further simulation studies are performed for four additional scenarios in which the quadrotor initial position is at $(x_0, y_0) = (-9~\mathrm{m}, 7~\mathrm{m})$, while the desired viewpoints are selected in different quadrants on the boundary. The set of desired viewpoints, along with the corresponding settling times and the associated guidance parameters obtained using the procedure described in Sec.~\ref{reach_vp}, is summarized in Table~\ref{table_for_case1}. The resulting quadrotor trajectories are shown in Fig.~\ref{case1a_traj}. The elliptic radial profiles in Fig.~\ref{case1a}(a) show that for all the scenarios, the quadrotor reaches the elliptical boundary within the specified settling time. The time responses of the line-of-sight and elliptic angles, plotted in Figs.~\ref{case1a}(b) and (c), show convergence to desired viewpoints with negligible errors. The corresponding velocity profiles are presented in Fig.~\ref{case1a}(d).

\begin{table*}[]
\caption{Simulation Parameters For Case 1: Viewpoints in Different Quadrants}
\label{table_for_case1}
\centering
\begin{tabular}{c c c c c c c c}
\hline \hline
   $\theta_d$  &  $T_{vp}$  &  $(x_d, y_d)$  &   $\phi_d$  & \multirow{2}{*}{$\lambda_\beta$}   & \multirow{2}{*}{$\lambda_\phi$}   & \multirow{2}{*}{$k$}     & \multirow{2}{*}{$\mu$} \\
 $(\textrm{deg})$  & $(\textrm{s})$  &  $(\textrm{m})$   &  $(\textrm{deg})$   &  &   &  &      \\ \hline
 
30 & 15 & (4.5356, 2.6186)  &  40.8934   & 1.4309  & 1.0594  &  1  &  0.6547                 \\ 

150 & 18 & (-4.5356, 2.6186)  &  139.1066   & 1.1924  & 0.5517  &  -1  &  -0.6547                 \\ 

195 & 12 & (-5.5672, -1.4917))  & 201.8964    & 1.7886  & 0.8491  &  -1  &0.3729                   \\ 

350 & 30 &  (5.8005, -1.0228) &  345.1851   &0.7154   &0.3652   &  1  &  -0.2557                 \\ \hline \hline
\end{tabular}
\end{table*}

\begin{figure}
  \centering
  \includegraphics[width=0.52\textwidth]{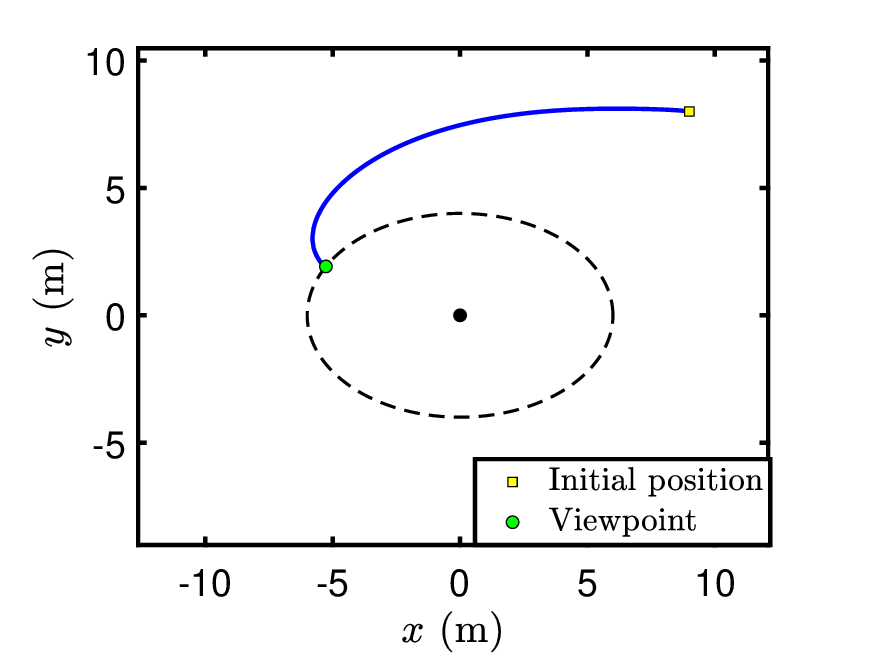}
  \caption{Quadrotor trajectory for Case 1.}
  \label{case1_traj}
\end{figure}

 \begin{figure}
	\centering
	\subfloat[{ Elliptic radius}]
	{\includegraphics[width=0.5\textwidth]{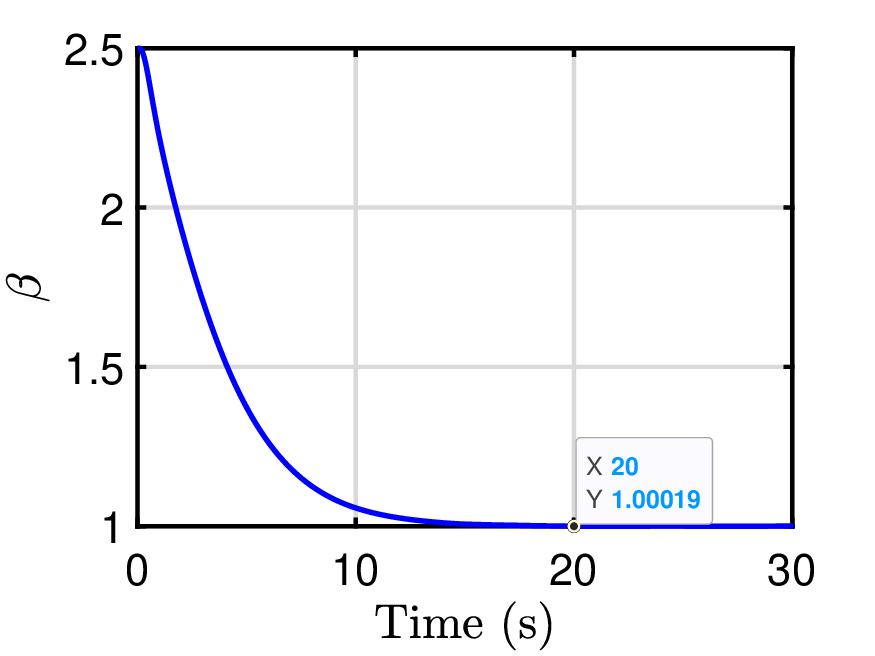}}
		\subfloat[{Line-of-sight and elliptic angles}]
	{\includegraphics[width=0.5\textwidth]{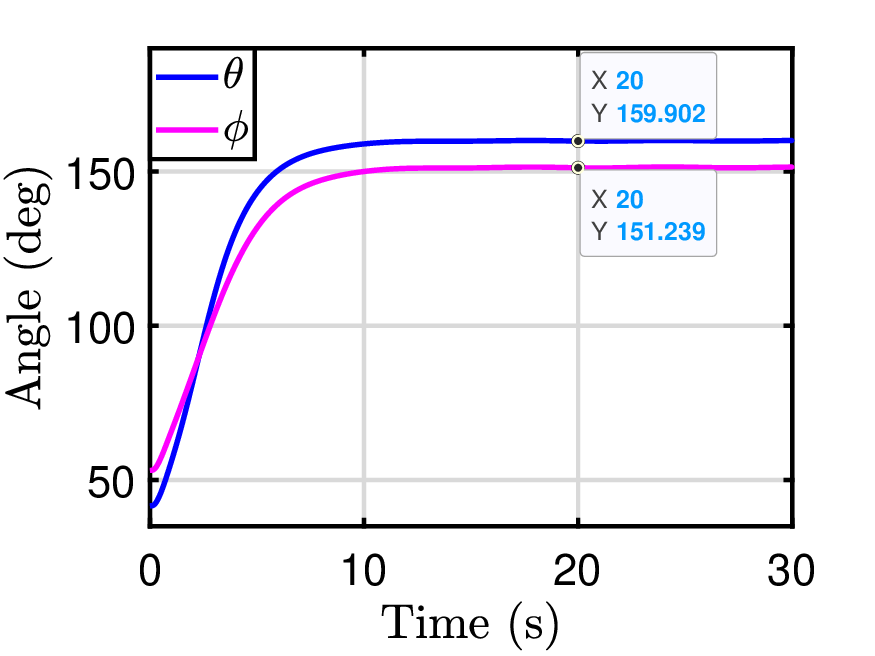}}\\
	\subfloat[{Altitude}]
	{\includegraphics[width=0.5\textwidth]{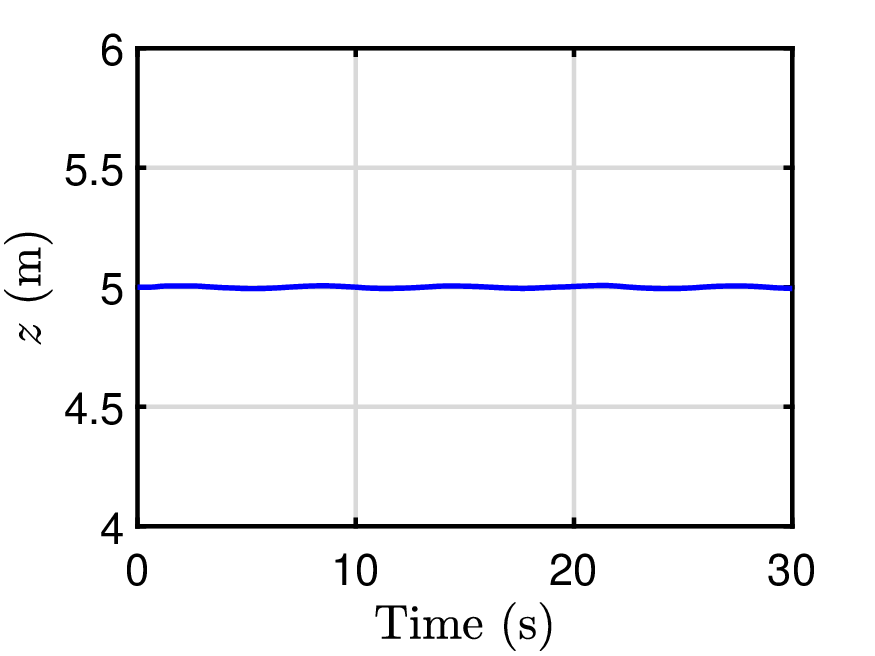}}
	\subfloat[{Roll angle} ]
	{\includegraphics[width=0.5\textwidth]{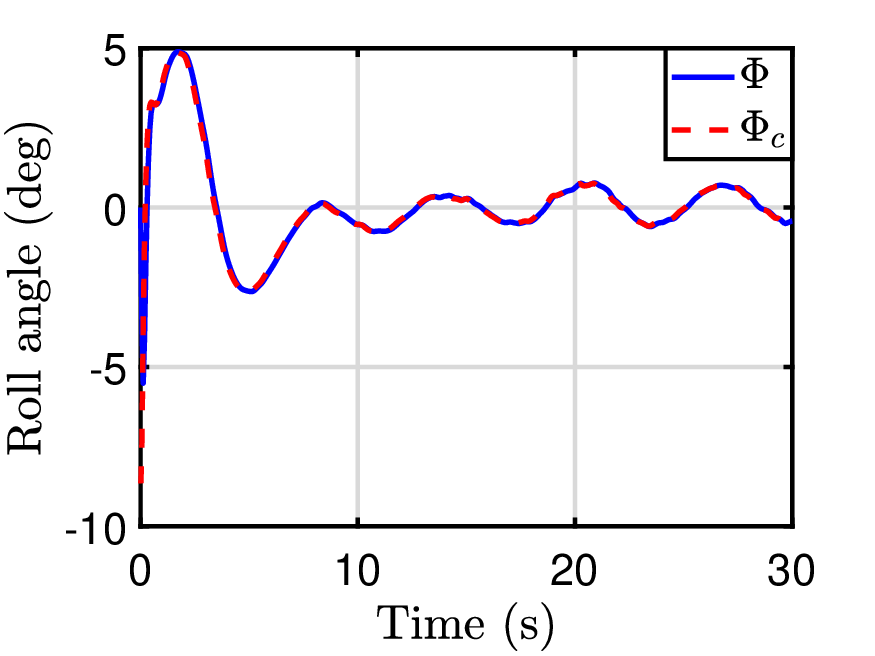}}\\
    \subfloat[{Pitch angle}]
	{\includegraphics[width=0.5\textwidth]{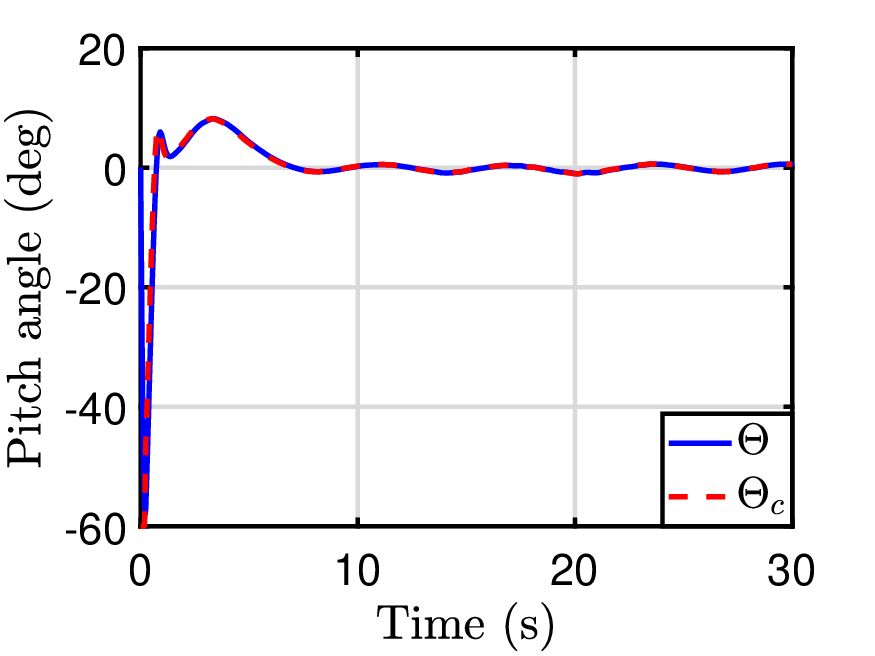}}
		\subfloat[{Velocity}]
	{\includegraphics[width=0.5\textwidth]{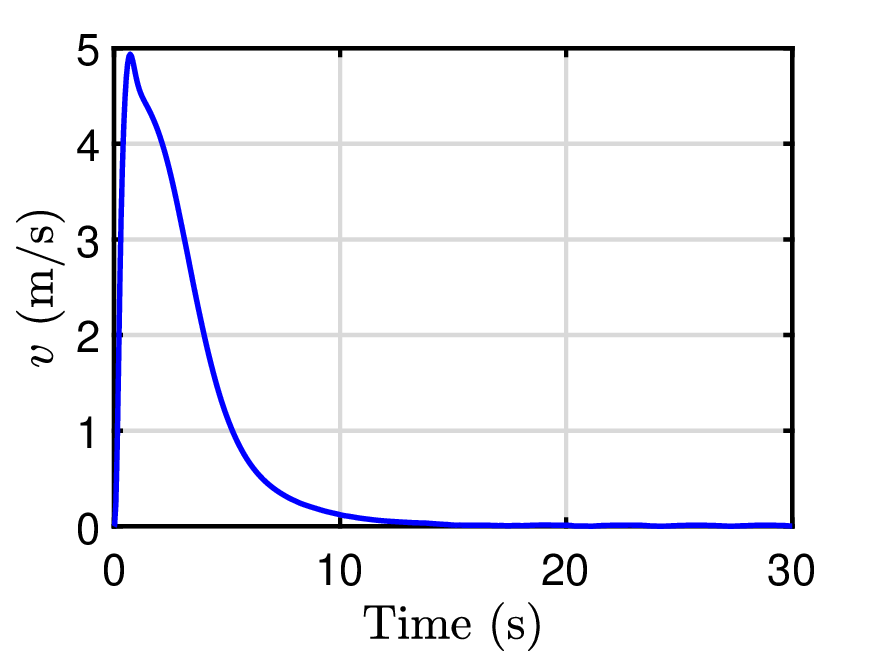}}
	\caption{Results for Case 1: Trajectory properties.}
	 \label{case1}
\end{figure}

\begin{figure}
  \centering
  \includegraphics[width=0.52\textwidth]{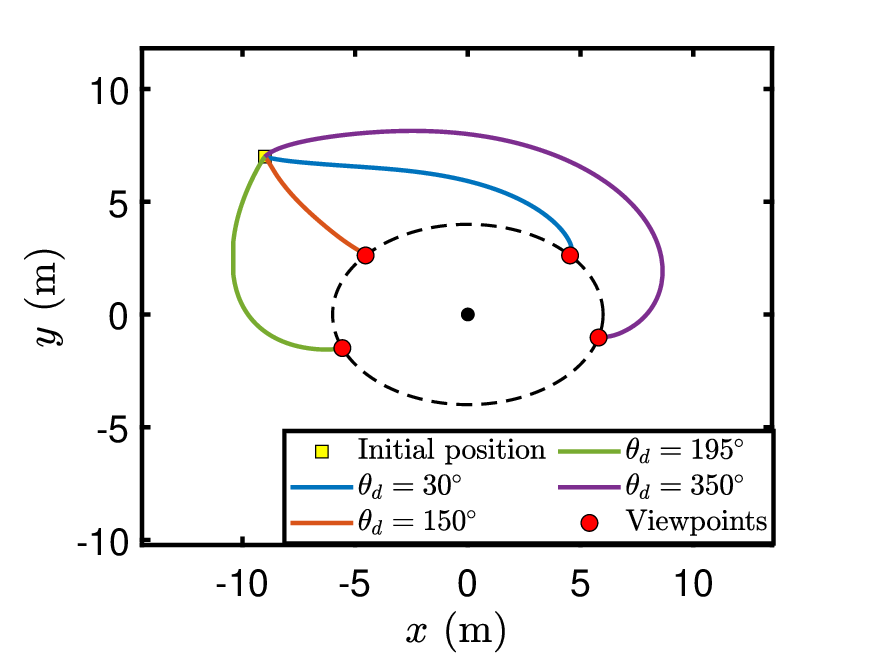}
  \caption{Results for Case 1: Viewpoints in different quadrants. }
  \label{case1a_traj}
\end{figure}

 \begin{figure}
	\centering
	\subfloat[{ Elliptic radii}]
	{\includegraphics[width=0.5\textwidth]{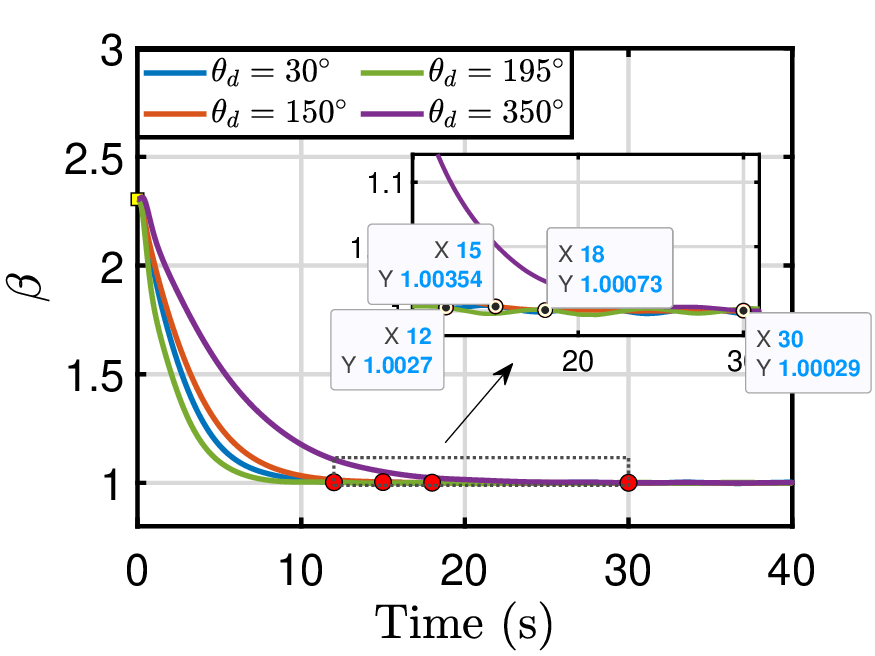}}
		\subfloat[{Line-of-sight angles}]
	{\includegraphics[width=0.5\textwidth]{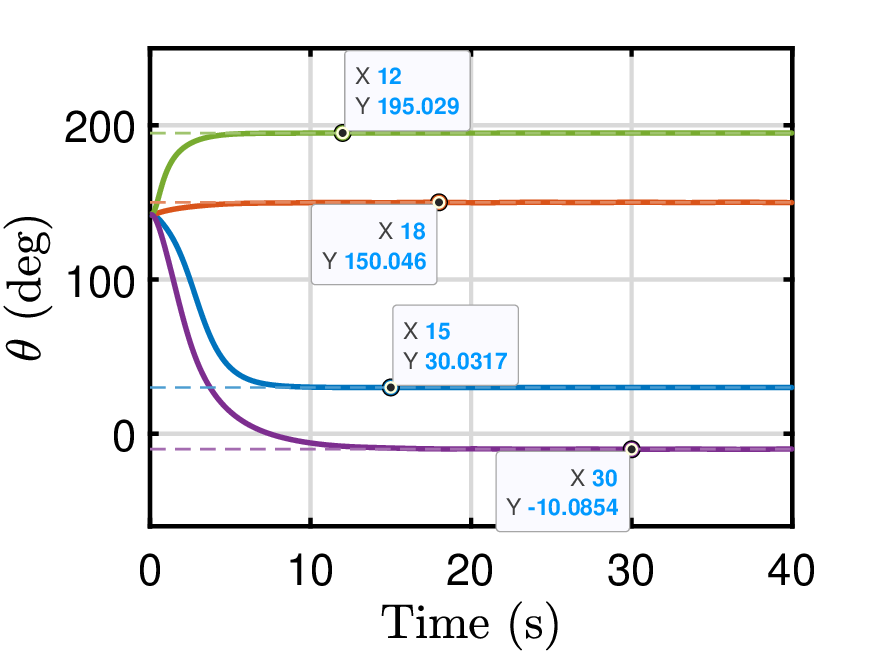}}\\
	\subfloat[{Elliptic angles }]
	{\includegraphics[width=0.5\textwidth]{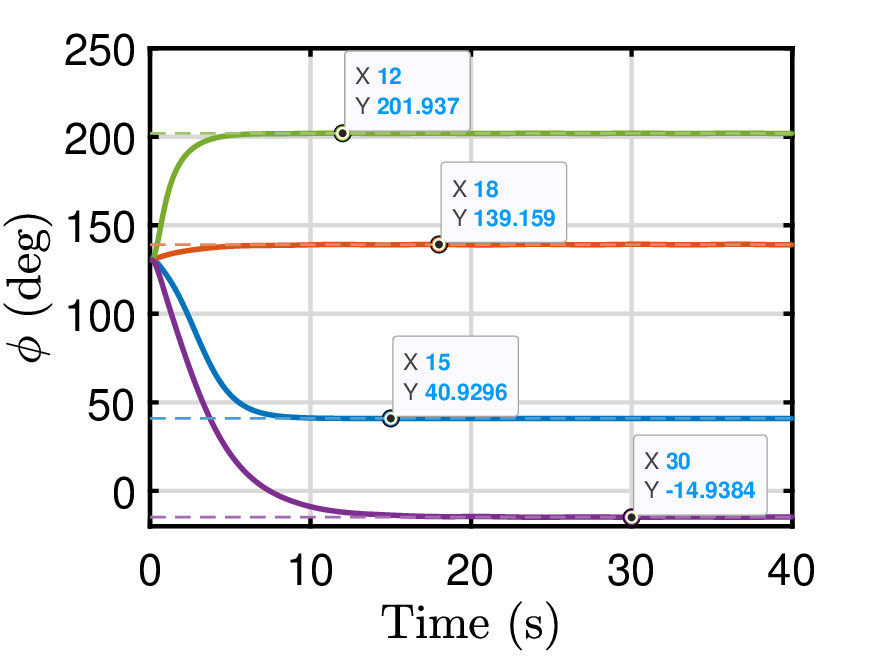}}
		\subfloat[{Velocities }]
	{\includegraphics[width=0.5\textwidth]{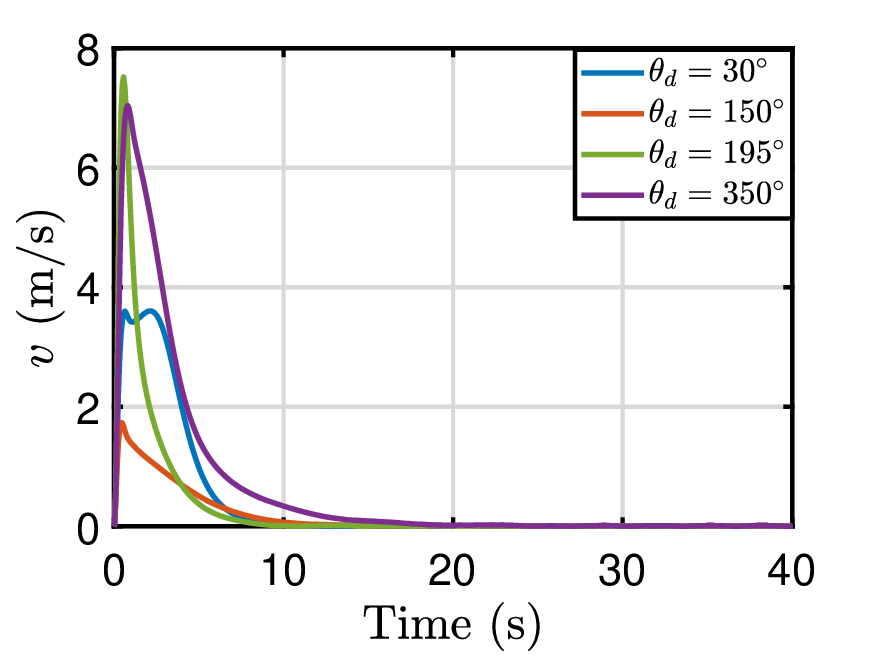}}
	\caption{Results for Case 1: Properties of trajectories leading to different quadrants.}
	 \label{case1a}
\end{figure}

\subsection{Case 2: Switching Between Viewpoints}
Here, the quadrotor initially located at $(x_0, y_0) = (8.5~\mathrm{m}, 7~\mathrm{m})$, is required to reach viewpoint A ($\theta_d = 152~\mathrm{deg}$) within a settling time of $T_{vp} = 22~\mathrm{s}$, followed by a switch to viewpoint B ($\theta_d = 210~\mathrm{deg}$) on the elliptical boundary within the duration $T_{sw} = 38~\mathrm{s}$. For reaching viewpoint A, the guidance parameters $(\lambda_{\beta} = 0.9665, \rho = 0.3679, \lambda_{\phi} = 0.6787, k = -1, \mu = -0.6235)$ are computed using the results in Sec.~\ref{reach_vp}. For switching to viewpoint B, the corresponding parameters $(\lambda_{\phi} = 0.3575, k = -1, \mu = 0.6547)$ are obtained following the method outlined in Sec.~\ref{switch_vp}. The resulting quadrotor trajectory is presented in Fig.~\ref{case2_traj}. The elliptic radial distance is shown in Fig.~\ref{case2}(a), while the variation in the line-of-sight angle is depicted in Fig.~\ref{case2}(b), indicating that the vehicle reaches viewpoint A at $t = 22~\mathrm{s}$ and subsequently arrives at viewpoint B at $t = T_{vp} + T_{sw} = 60~\mathrm{s}$, in accordance with the prescribed switching time. The corresponding elliptic angle variation is illustrated in Fig.~\ref{case2}(c), and the vehicle’s velocity profile is presented in Fig.~\ref{case2}(d).

\begin{figure}
  \centering
  \includegraphics[width=0.52\textwidth]{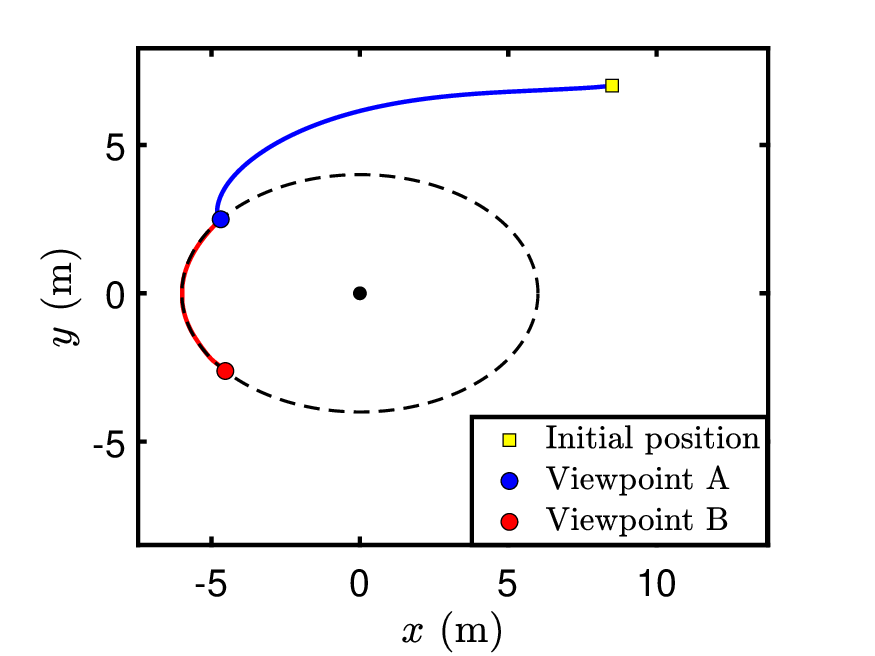}
  \caption{Quadrotor trajectory for Case 2.}
  \label{case2_traj}
\end{figure}

 \begin{figure}
	\centering
	\subfloat[{ Elliptic radius}]
	{\includegraphics[width=0.5\textwidth]{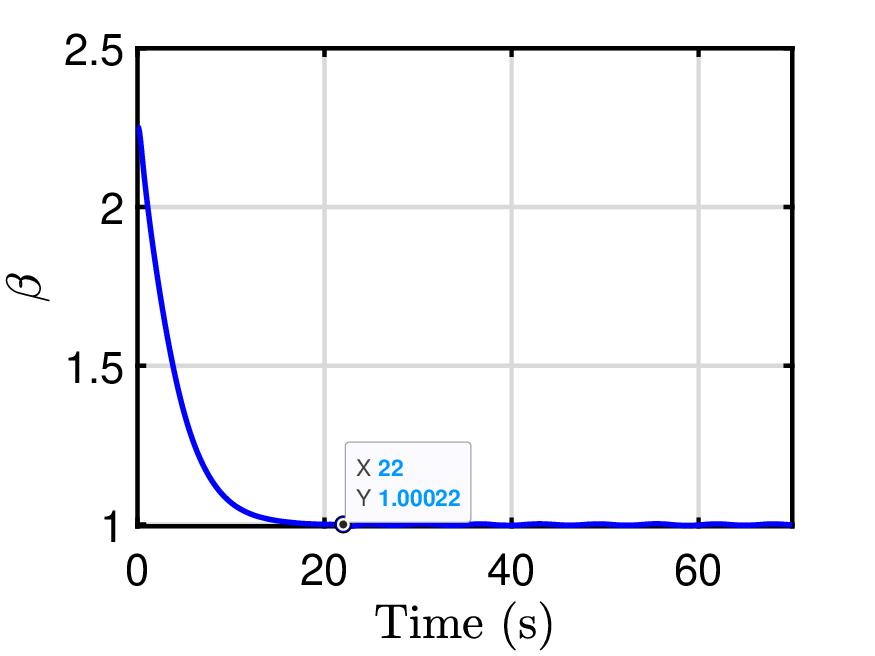}}
		\subfloat[{Line-of-sight angle}]
	{\includegraphics[width=0.5\textwidth]{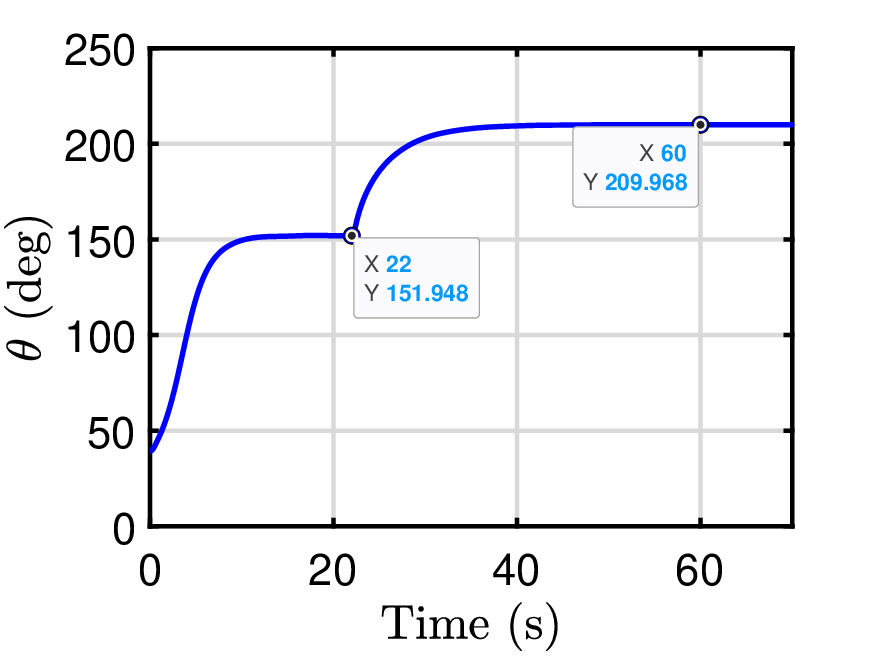}}\\
	\subfloat[{Elliptic angle}]
	{\includegraphics[width=0.5\textwidth]{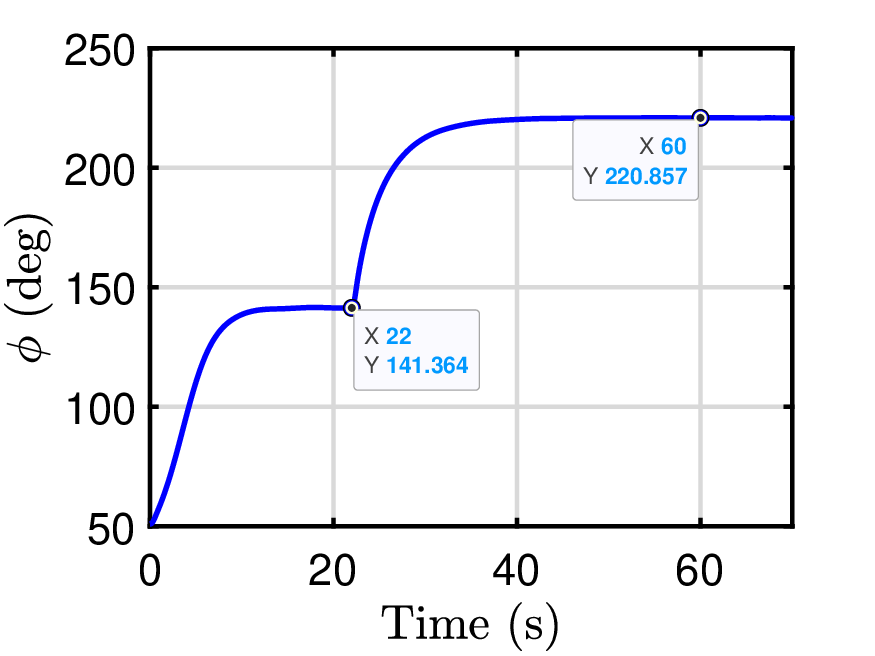}}
		\subfloat[{Velocity}]
	{\includegraphics[width=0.5\textwidth]{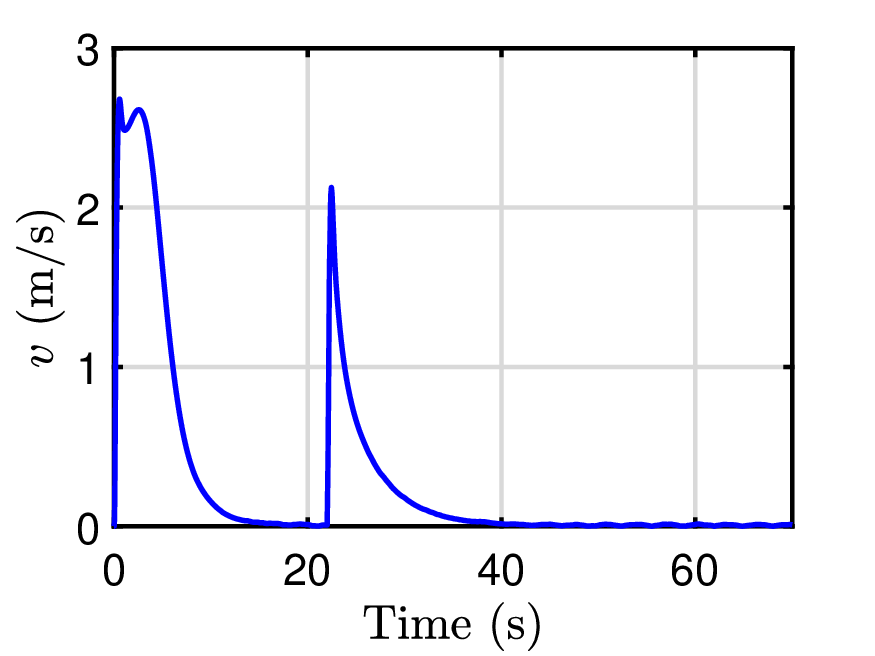}} 
	\caption{Results for Case 2: Switching between viewpoints along the elliptical boundary.}
	 \label{case2}
\end{figure}

\subsection{Case 3: Constant Speed Elliptical Circumnavigation}
This case considers a scenario in which the quadrotor is initially at $(x_0, y_0) = (-8~\mathrm{m}, -8~\mathrm{m})$ and is required to converge to the elliptical boundary within a settling time of $T_{cn} = 24.5~\mathrm{s}$. Subsequently, it is desired to fly along the elliptical boundary in the counter-clockwise direction at a constant speed of $v_d = 1~\mathrm{m/s}$. The associated guidance parameters $(\lambda_{\beta} = 0.8908, \rho = 0.3679, k = 0, \mu = 1)$ are computed using the results described in Sec.~\ref{cn_const_speed}, where $\mu = 1$ is selected to achieve counter-clockwise motion. The resulting trajectory is shown in Fig.~\ref{case3_traj}. The elliptic radial distance profile, depicted in Fig.~\ref{case3}(a), indicates that the quadrotor reaches the elliptical boundary within the stipulated time with negligible error. The elliptic angle variation is presented in Fig.~\ref{case3}(b). Figure~\ref{case3}(c) shows that the quadrotor closely tracks the reference altitude. The roll and pitch attitude responses are plotted in Figs.~\ref{case3}(d) and \ref{case3}(e), respectively, while the velocity profile in Fig.~\ref{case3}(f) demonstrates that, after settling on the elliptical boundary, the quadrotor maintains almost the desired speed.

Next, two additional simulation scenarios are considered with a common initial quadrotor position of $(x_0, y_0) = (-10~\mathrm{m}, 8~\mathrm{m})$. The scenarios differ in the specified settling times for convergence to the boundary and the desired speeds for subsequent motion along the boundary. The corresponding simulation parameters are summarized in Table~\ref{table_for_case3}. Here, $\mu = 1$ corresponds to counter-clockwise motion. The resulting quadrotor trajectories, depicted in Fig.~\ref{case3a}(a), converge to the elliptical boundary within the prescribed settling times, as shown in Fig.~\ref{case3a}(b). The corresponding elliptic angle variations are presented in Fig.~\ref{case3a}(c), while the velocity profiles shown in Fig.~\ref{case3a}(d) indicate that, upon reaching the elliptical boundary, the quadrotor maintains the specified constant speed  with negligible fluctuations.

\begin{table}[]
\caption{Simulation Parameters for Case 3}
\label{table_for_case3}
\centering
\begin{tabular}{c c c c}
\hline \hline
$T_{\textrm{cn}}$ (s) & $v_d$ (m/s) & $\lambda_{\beta}$ & $\mu$ \\ \hline
40 & 0.5 & 0.5631 & 1 \\ 
33 & 1.5 & 0.6825 & 1 \\ \hline \hline
\end{tabular}
\end{table}

\begin{figure}
  \centering
  \includegraphics[width=0.52\textwidth]{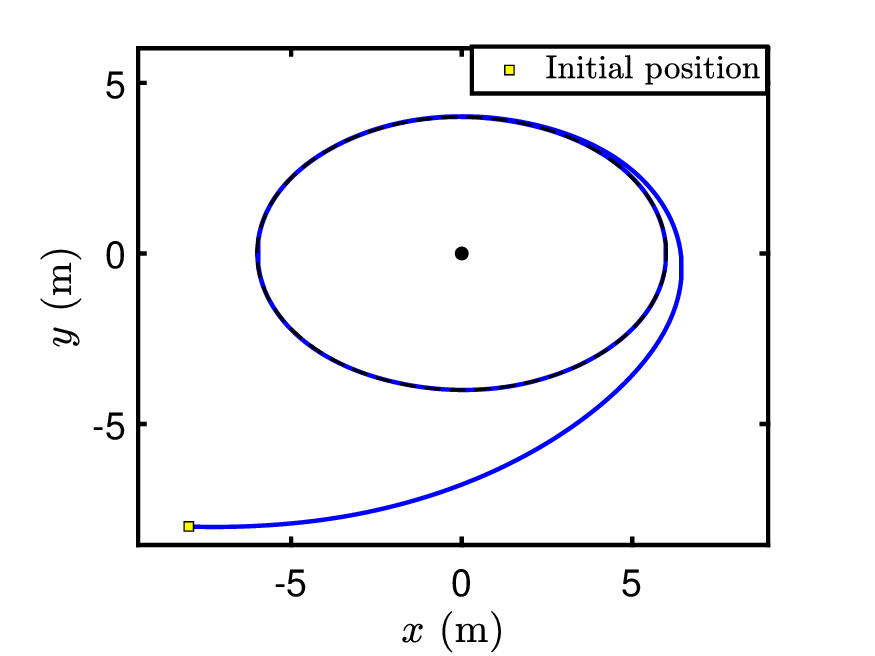}
  \caption{Quadrotor trajectory for Case 3.}
  \label{case3_traj}
\end{figure}

 \begin{figure}
	\centering
	\subfloat[{ Elliptic radius}]
	{\includegraphics[width=0.5\textwidth]{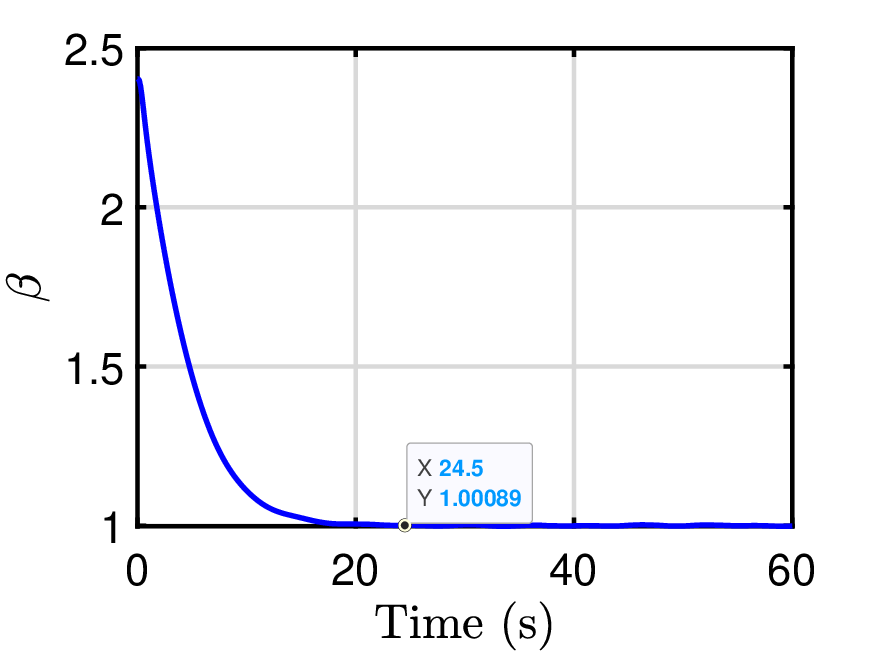}}
		\subfloat[{Elliptic angle}]
	{\includegraphics[width=0.5\textwidth]{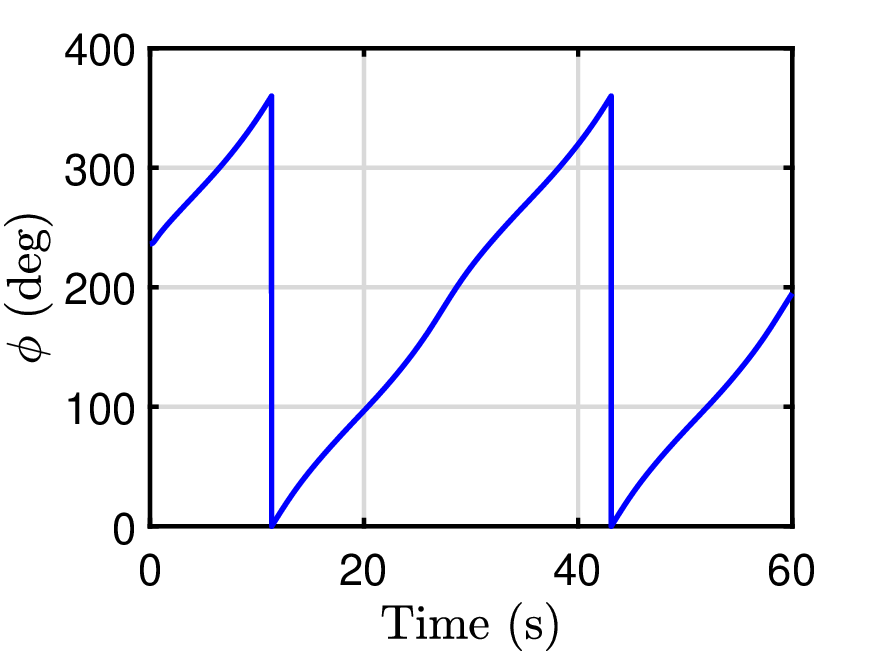}}\\
	\subfloat[{Altitude }]
	{\includegraphics[width=0.5\textwidth]{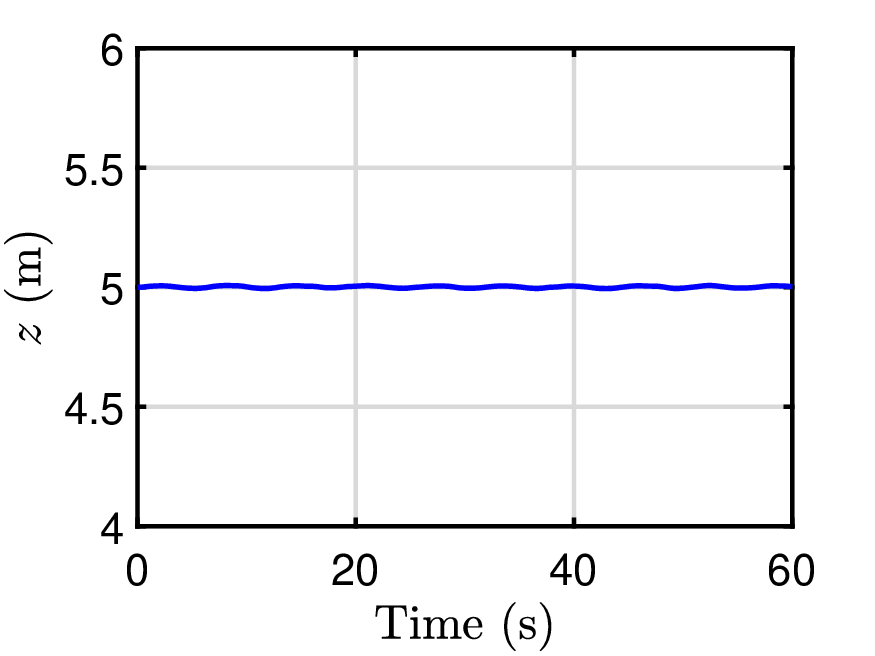}}
	\subfloat[{Roll angle } ]
	{\includegraphics[width=0.5\textwidth]{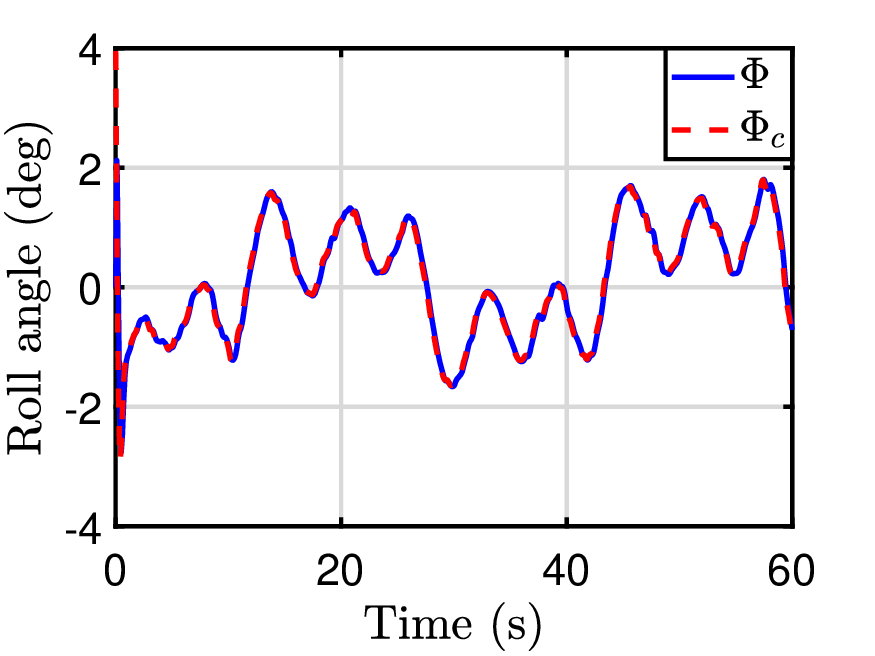}}\\
    \subfloat[{Pitch angle }]
	{\includegraphics[width=0.5\textwidth]{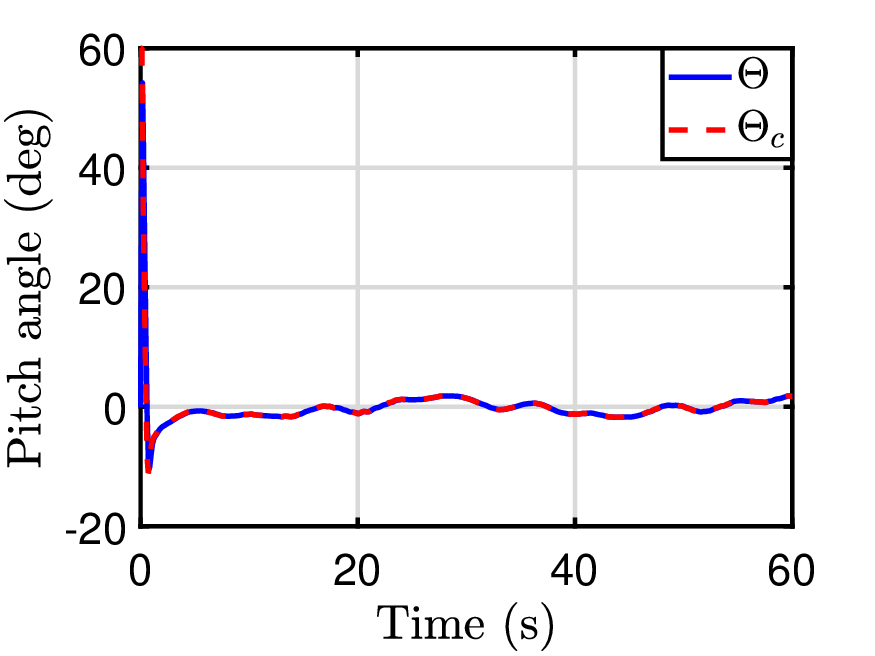}}
		\subfloat[{Velocity}]
	{\includegraphics[width=0.5\textwidth]{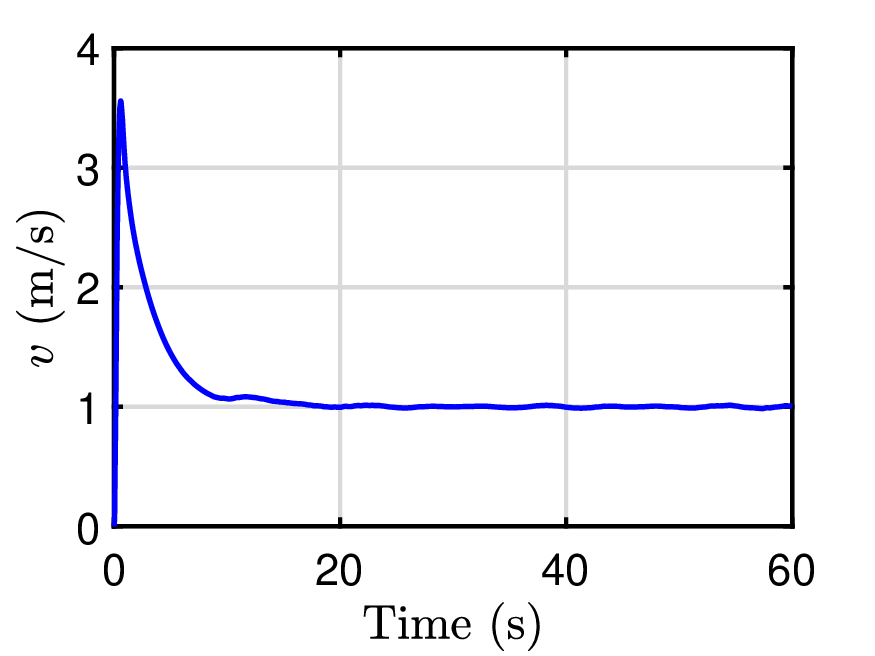}}
	\caption{Results for Case 3: Constant-speed elliptical circumnavigation.}
	 \label{case3}
\end{figure}

\begin{figure}
	\centering
	\subfloat[{ Quadrotor trajectories}]
	{\includegraphics[width=0.5\textwidth]{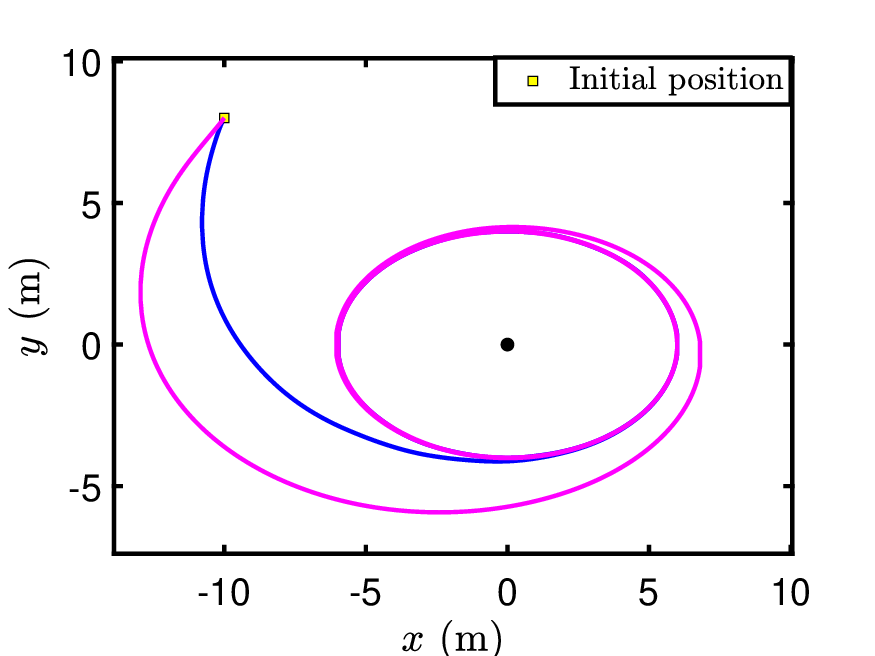}}
		\subfloat[{Elliptic radii}]
	{\includegraphics[width=0.5\textwidth]{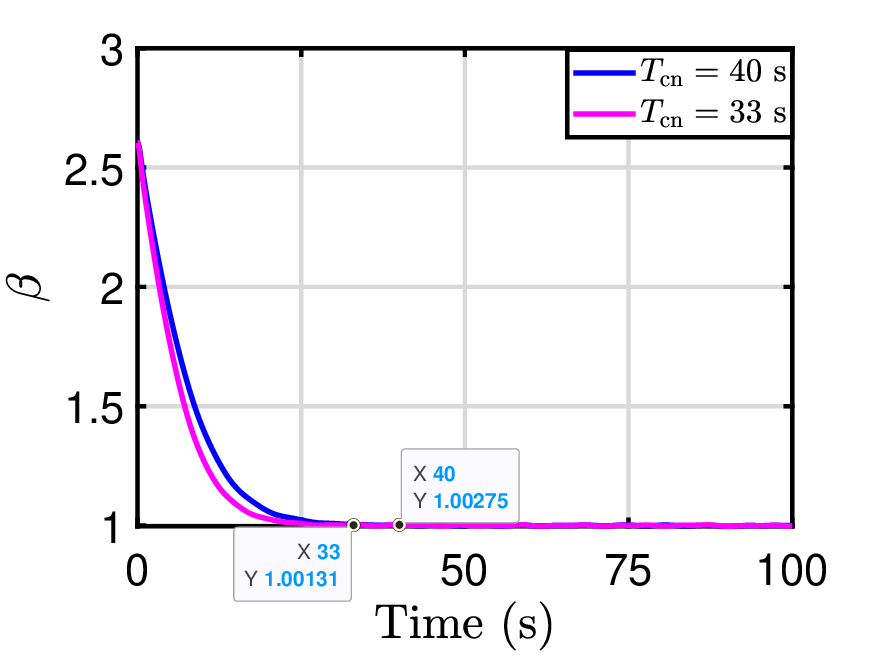}}\\
	\subfloat[{Elliptic angles}]
	{\includegraphics[width=0.5\textwidth]{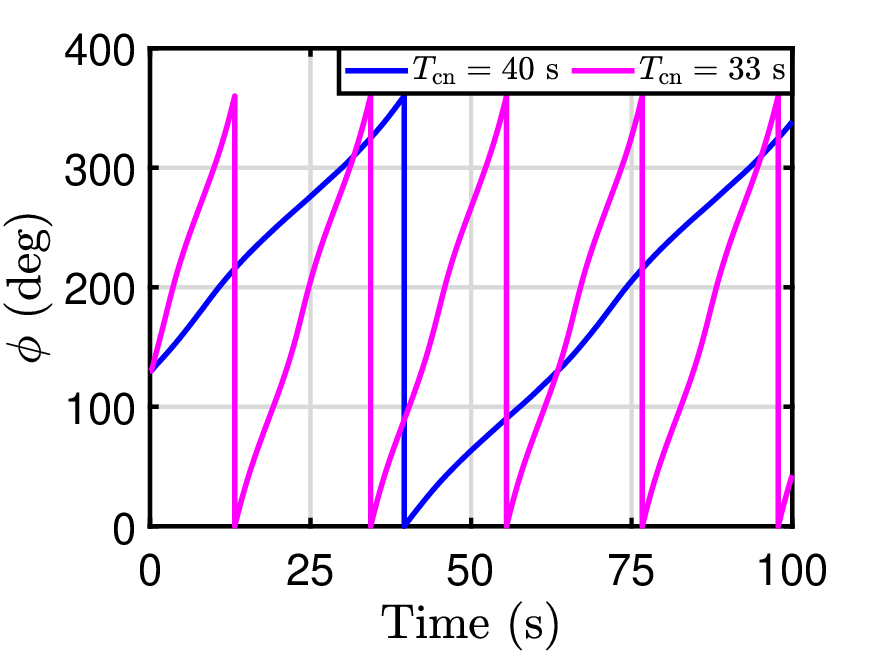}}
		\subfloat[{Velocities}]
	{\includegraphics[width=0.5\textwidth]{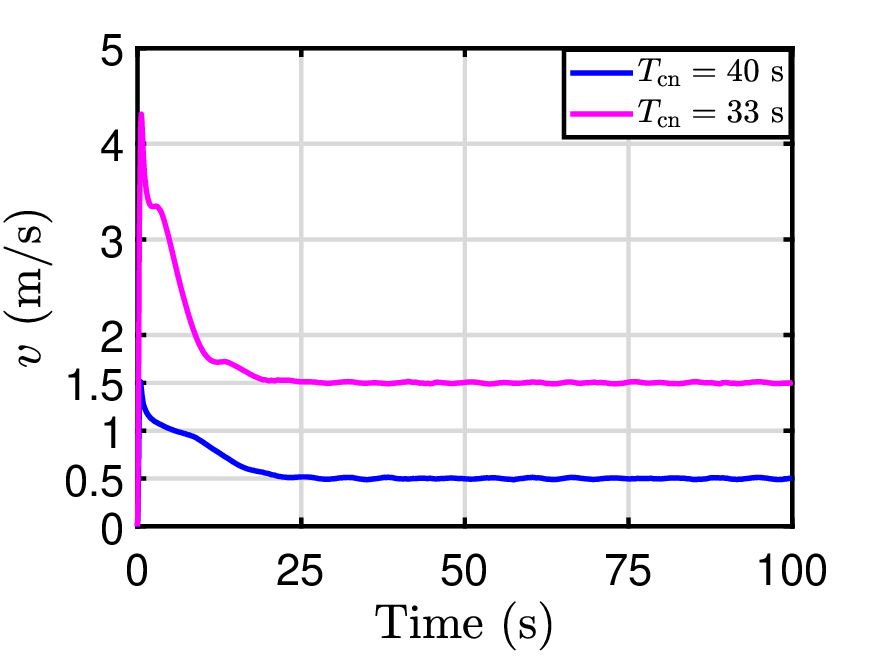}}
	\caption{Results for Case 3: Constant-speed elliptical circumnavigation for different settling times and desired speeds.}
	 \label{case3a}
\end{figure}

\subsection{Case 4: Dwell Time Constrained Elliptical Circumnavigation}
Consider the initial quadrotor position to be $(x_0, y_0) = (10~\mathrm{m}, 0~\mathrm{m})$ in this scenario, where the quadrotor is required to settle onto the boundary within $T_{cn} = 18.5~\mathrm{s}$. Subsequently, while traversing the boundary, it is desired to dwell for $T_{dw} = 10~\mathrm{s}$ around the viewpoint at $\theta_d = 130~\mathrm{deg}$ within an angular sector of $\delta = \pm 30~\mathrm{deg}$. Accordingly, the guidance parameters $\rho = 0.3679$ and $\lambda_{\beta} = 1.0070$ are determined using Eqs.~\eqref{p_parameter} and \eqref{lambda_beta_par}, respectively. The bifurcation parameter $\mu = 1.3166$ is computed with $\lambda_{\phi} = 0.2$ by numerically solving Eq.~\eqref{mu_dwell} using MATLAB's \texttt{fsolve} command. The resulting quadrotor trajectory is shown in Fig.~\ref{case4_traj}. The elliptic radial distance profile, shown in Fig.~\ref{case4}(a), indicates convergence to the  boundary within the specified settling time. The corresponding line-of-sight angle variation in Fig.~\ref{case4}(b) illustrates that the quadrotor dwells for $T_{dw} = 10~\mathrm{s}$ within the angular sector $[100, 160]~\mathrm{deg}$ with negligible deviation while maintaining the reference altitude, as depicted in Fig.~\ref{case4}(c). The roll and pitch responses are shown in Figs.~\ref{case4}(d) and \ref{case4}(e), respectively, while the corresponding velocity profile is presented in Fig.~\ref{case4}(f).

\begin{figure}
  \centering
  \includegraphics[width=0.52\textwidth]{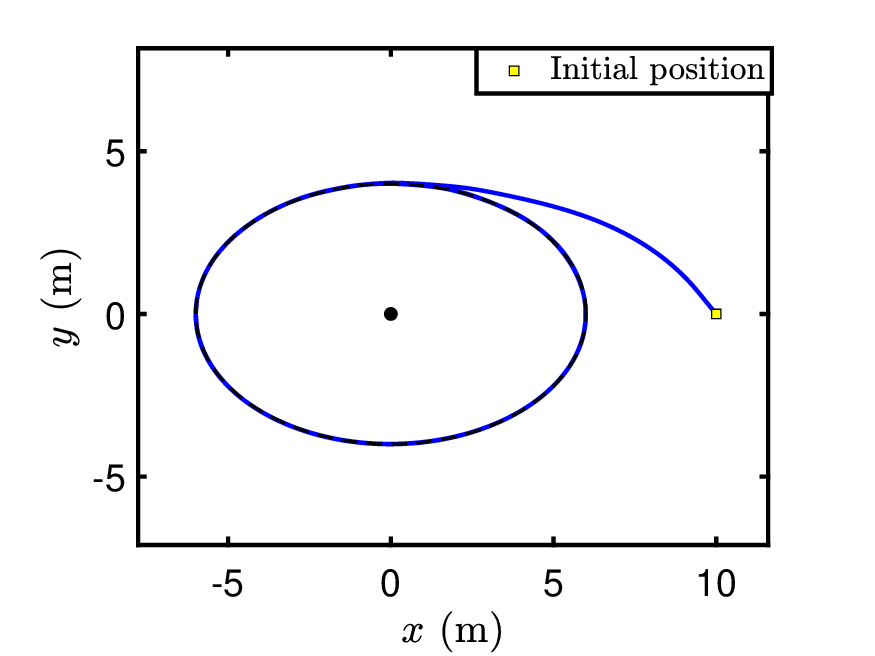}
  \caption{Quadrotor trajectory for Case 4.}
  \label{case4_traj}
\end{figure}

 \begin{figure}
	\centering
	\subfloat[{ Elliptic radius}]
	{\includegraphics[width=0.5\textwidth]{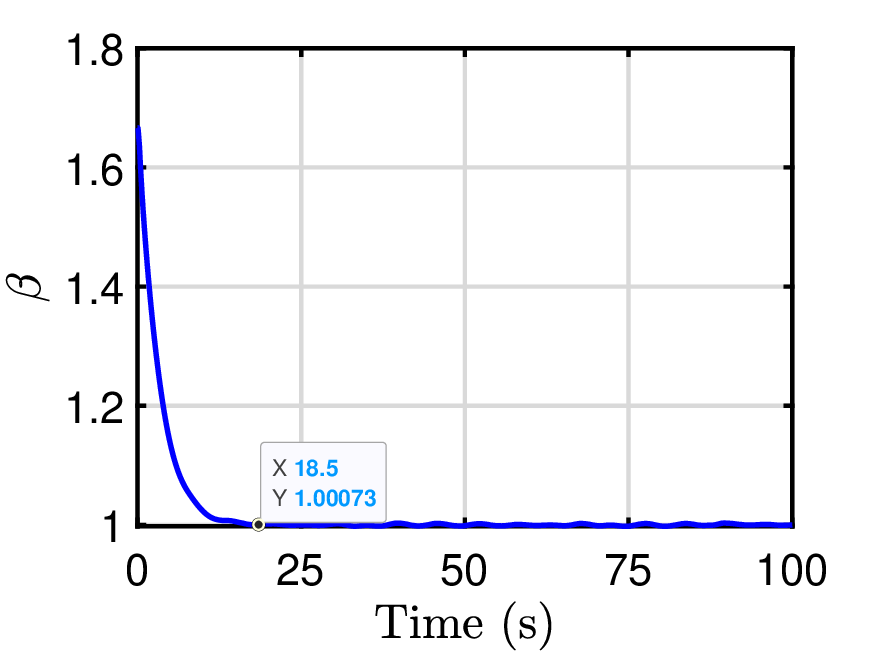}}
		\subfloat[{Line-of-sight angle }]
	{\includegraphics[width=0.5\textwidth]{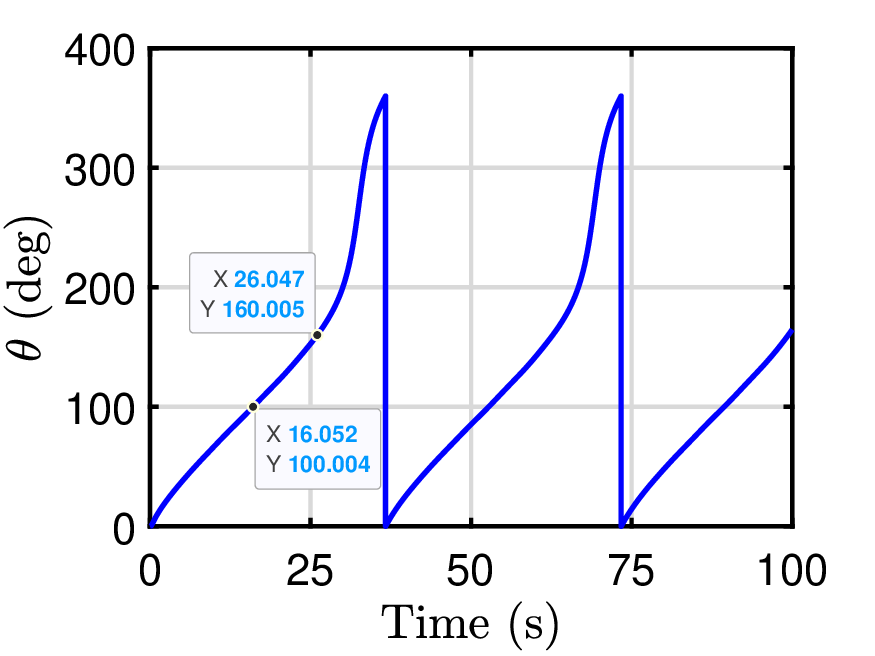}}\\
	\subfloat[{Altitude }]
	{\includegraphics[width=0.5\textwidth]{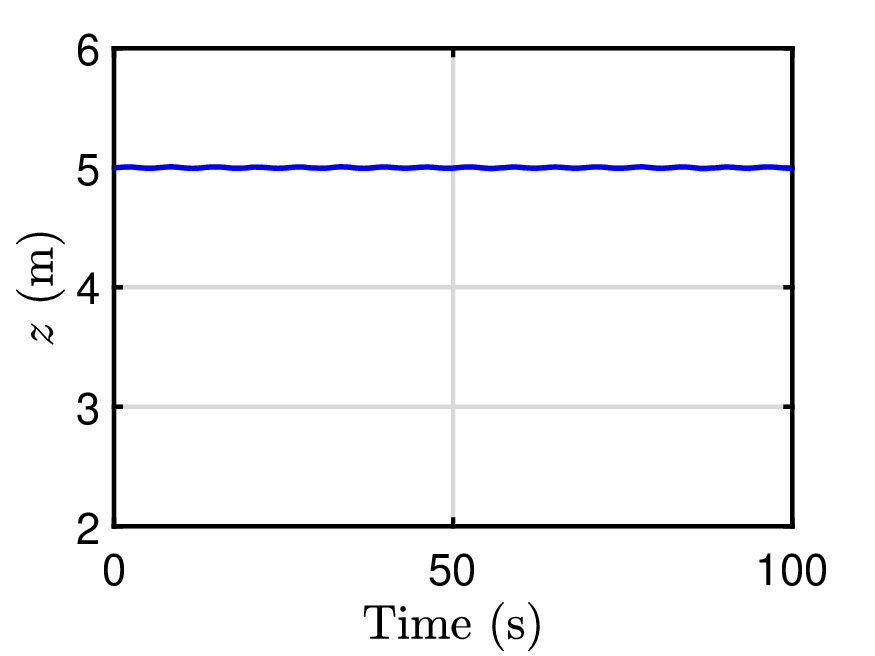}}
	\subfloat[{Roll angle} ]
	{\includegraphics[width=0.5\textwidth]{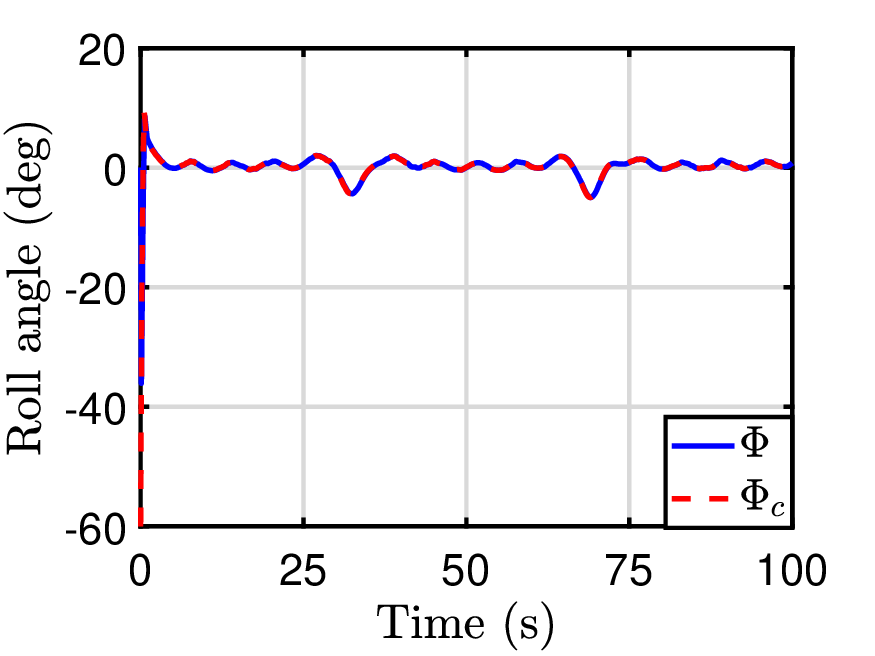}}\\
    \subfloat[{Pitch angle }]
	{\includegraphics[width=0.5\textwidth]{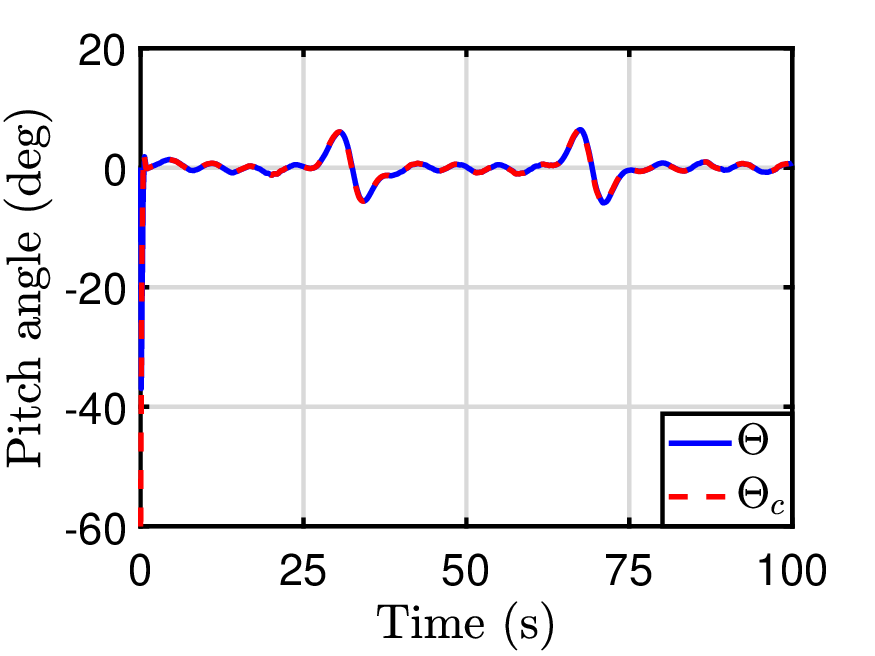}}
		\subfloat[{Velocity }]
	{\includegraphics[width=0.5\textwidth]{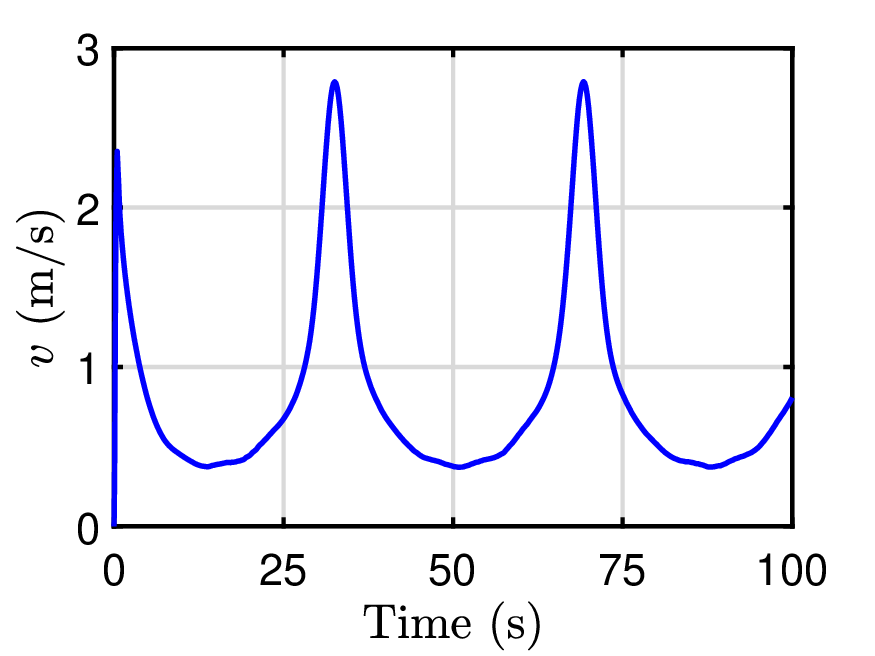}}
	\caption{Results for Case 4: Dwell-time constrained elliptical circumnavigation.}
	 \label{case4}
\end{figure}

\section{Conclusions}\label{sec:conclusions}
Using a bifurcation theory-based framework, this work proposes a novel quadrotor guidance method for carrying out distinct inspection tasks along an elliptical boundary. A new decoupled two-variable position representation defines the elliptical boundary. The proposed guidance method is developed using a saddle-node infinite-period bifurcation in the decoupled elliptic radial and angle dynamics, where the stable fixed point is matched to a desired viewpoint. The formulation ensures convergence to the desired viewpoint from every initial condition except those corresponding to the unstable fixed point of the elliptic angle dynamics.

The framework naturally supports additional tasks, such as switching between viewpoints and continuous elliptical circumnavigation. These tasks are accomplished by appropriately selecting the bifurcation and constant parameters in the elliptic-angle dynamics. Analytical expressions for settling time are also deduced using the radial control parameter and angular control coefficient in the elliptic radial and angle dynamics, respectively. Furthermore, an advanced circumnavigation feature is introduced that allows the quadrotor to dwell within a specified angular sector of the elliptical boundary for a prescribed duration. This task is achieved by exploiting stable limit cycles that arise naturally through variation of the bifurcation parameter, which can be numerically determined to satisfy the dwell-time constraint.

A qualitative comparison with existing elliptical circumnavigation approaches demonstrates the flexibility of the proposed method in enabling distinct trajectory modes through a unified formulation. The proposed guidance method yields closed-form and algebraically simpler velocity expressions for the quadrotor, facilitating practical implementation. The effectiveness of the proposed framework is demonstrated through extensive numerical simulations in which the reference trajectories generated by the proposed method are tracked using a 6-DOF quadrotor model in the presence of bounded disturbances, namely, model uncertainties and wind gusts modeled using the Dryden model. Overall, the results highlight the flexibility, robustness, and computational simplicity of the proposed method in accomplishing a diverse set of tasks using a unified guidance strategy. Future work will focus on extending the proposed method to three-dimensional geometrical boundaries, such as an elliptic cylinder, by incorporating elevation dynamics in addition to the elliptic radial and angular dynamics.

\section*{Appendix: Proof of Proposition~\ref{prop3}}
\begin{proof}
From Eq.~\eqref{desired_phi_dot}, the direction of approach of $\phi(t)$ toward a stable fixed point $\phi_d$ is determined by the sign of the initial elliptic angular rate $\dot{\phi}(0)$. Accordingly, following sets are defined when $ k \in \{-1, 1 \}$ as
\begin{equation*}
\mathbb{S}_{+} := \left\{ \phi_0 \,\big|\, \dot{\phi}(0) > 0 \right\}, \qquad
\mathbb{S}_{-} := \left\{ \phi_0 \,\big|\, \dot{\phi}(0) < 0 \right\}
\end{equation*}
Next, we consider two cases for $\mu$.
\subsection{$\mu \in (-1,1) \setminus \{0\}$}
We further consider two sub cases, that is, $\phi_0 \in \mathbb{S}_{+}$ and $\phi_0 \in \mathbb{S}_{-}$.
\subsubsection{$\phi_0 \in \mathbb{S}_{+}$}
From Definition~\ref{definition_stat}, the settling time $t_{s_{\phi}}$ satisfies
\begin{equation}
\phi(t \rightarrow \infty) - \phi(t_{s_{\phi}}) = \epsilon_{\phi}, \quad 0 < \epsilon_{\phi} \ll 1
\label{a1_revised}
\end{equation}
Using Eq.~\eqref{phi(t) expression} in Eq.~\eqref{a1_revised} yields
{\small
\begin{equation}
2 \tan^{-1}(m_2) - 2 \tan^{-1} \left(
\frac{ m_1 e^{- t_{s_{\phi}} \lambda_{\phi} \sqrt{k^2 - \mu^2}} - \frac{m_2 m_4}{m_3} }{ e^{- t_{s_{\phi}} \lambda_{\phi} \sqrt{k^2 - \mu^2}} - \frac{m_4}{m_3} }
\right) = \epsilon_{\phi}
\label{intermediate_1}
\end{equation} }
On simplifying Eq.~\eqref{intermediate_1}, leads to
{\small
\begin{equation}
t_{s_{\phi}} =
- \frac{1}{\lambda_{\phi} \sqrt{k^2 - \mu^2}} \ln \left(
\frac{ m_4 \left[ \tan \left( \tan^{-1}(m_2) - \frac{\epsilon_{\phi}}{2} \right) - m_2 \right] }{ m_3 \left[ \tan \left( \tan^{-1}(m_2) - \frac{\epsilon_{\phi}}{2} \right) - m_1 \right] }
\right)
\end{equation}}

\subsubsection{$\phi_0 \in \mathbb{S}_{-}$}
Similarly, from Definition~\ref{definition_stat},
\begin{equation}
\phi(t_{s_{\phi}}) - \phi(t \rightarrow \infty) = \epsilon_{\phi}
\end{equation}
Using Eq.~\eqref{phi(t) expression}, this leads to
{\small
\begin{equation}
t_{s_{\phi}} =
- \frac{1}{\lambda_{\phi} \sqrt{k^2 - \mu^2}} \ln \left(
\frac{ m_4 \left[ \tan \left( \tan^{-1}(m_2) + \frac{\epsilon_{\phi}}{2} \right) - m_2 \right] }{ m_3 \left[ \tan \left( \tan^{-1}(m_2) + \frac{\epsilon_{\phi}}{2} \right) - m_1 \right] }
\right)
\end{equation} }

\subsection{$\mu = 0$}
From \eqref{mu_parameter_p1} and \eqref{mu_parameter_m1}, the stable fixed point is $\phi^* = 0$ for $k = 1$ and $\phi^* = \pi$ for $k = -1$.

\subsubsection{$\phi_0 \in \mathbb{S}_{+}$}
Using Definition~\ref{definition_stat},
\begin{equation}
\phi^* - \phi(t_{s_{\phi}}) = \epsilon_{\phi}
\end{equation}
Using Eq.~\eqref{phi(t) expression}, the settling time is
\begin{equation}
t_{s_{\phi}} = - \frac{1}{\lambda_{\phi} k} \ln \left(
\frac{\tan\left( (\phi^* - \epsilon_\phi)/2 \right)}{\tan(\phi_0/2)}
\right)
\end{equation}

\subsubsection{$\phi_0 \in \mathbb{S}_{-}$}
In this case,
\begin{equation}
\phi(t_{s_{\phi}}) - \phi^* = \epsilon_{\phi}
\end{equation}
which leads to
\begin{equation}
t_{s_{\phi}} = - \frac{1}{\lambda_{\phi} k} \ln \left(
\frac{\tan\left( (\phi^* + \epsilon_\phi)/2 \right)}{\tan(\phi_0/2)}
\right)
\end{equation} 

\end{proof}

\section*{Acknowledgment}
The authors gratefully acknowledge the partial funding support provided by the Science and Engineering Research Board (SERB), Department of Science and Technology, Government of India, under Core Research Grant No. CRG/2022/008964. 

\bibliography{sample}

\end{document}